\expandafter\ifx\csname ifarxiv\endcsname\relax
  \expandafter\newif\csname ifarxiv\endcsname
  \csname arxivtrue\endcsname
\fi
\documentclass[lettersize,journal]{IEEEtran}
\usepackage{amsmath,amsfonts,amsthm,amssymb,mathtools}
\usepackage[ruled,vlined,linesnumbered,noend]{algorithm2e}
\usepackage{array}
\usepackage[caption=false,font=normalsize,labelfont=sf,textfont=sf]{subfig}
\usepackage{textcomp}
\usepackage{stfloats}
\usepackage{url}
\usepackage{verbatim}
\usepackage{graphicx}
\usepackage{cite}
\newenvironment{constraintblock}{%
  \setlength{\abovedisplayskip}{3pt plus 1pt minus 1pt}%
  \setlength{\belowdisplayskip}{3pt plus 1pt minus 1pt}%
  \setlength{\abovedisplayshortskip}{3pt plus 1pt minus 1pt}%
  \setlength{\belowdisplayshortskip}{3pt plus 1pt minus 1pt}%
}{\ignorespacesafterend}
\newenvironment{conditions}{%
  \renewcommand{\theenumi}{(\roman{enumi})}%
  \begin{enumerate}[\IEEEsetlabelwidth{(iii)}]%
}{\end{enumerate}}

\DontPrintSemicolon
\SetKwFunction{CP}{Algorithm \ref{alg:cp-lodge}}

\newtheorem{theorem}{Theorem}
\newtheorem{proposition}{Proposition}
\newtheorem{lemma}{Lemma}
\newtheorem{definition}{Definition}

\newtheorem{assumption}{Assumption}

\begin{document}
\bstctlcite{IEEEexample:BSTcontrol}

\title{A Deficiency-Based Approach for Unbalanced Three-Phase Distribution System Expansion Planning}

\author{Eli~Brock,~\IEEEmembership{Student Member,~IEEE},
        Miguel~Heleno,~\IEEEmembership{Senior Member,~IEEE},
        Javad~Lavaei,~\IEEEmembership{Fellow,~IEEE},
        and Somayeh~Sojoudi,~\IEEEmembership{Fellow,~IEEE}%
\thanks{\ifarxiv\else Manuscript received [date]; revised [date]. \fi This work was supported in part by the U.S. Army Research Laboratory and the U.S. Army Research Office under Grant W911NF2010219, the Office of Naval Research, and NSF.}%
\thanks{E. Brock and S. Sojoudi are with the Department of Electrical Engineering and Computer Sciences, University of California, Berkeley, CA 94720 USA (e-mail: eli.brock@berkeley.edu; sojoudi@berkeley.edu).}%
\thanks{J. Lavaei is with the Department of Industrial Engineering and Operations Research, University of California, Berkeley, CA 94720 USA (e-mail: lavaei@berkeley.edu).}%
\thanks{M. Heleno is with Lawrence Berkeley National Laboratory, Berkeley, CA 94720 USA (e-mail: MiguelHeleno@lbl.gov).}%
}

\ifarxiv
  \def\paperjournal{}
\else
  \def\paperjournal{IEEE Transactions on Power Systems}
\fi
\markboth{\paperjournal}%
{Brock \MakeLowercase{\textit{et al.}}: Deficiency-Based Unbalanced Distribution Expansion Planning}

\maketitle

\begin{abstract}
Distribution planners typically characterize grid needs through power flow studies and fit upgrades to those needs by hand. More principled one-shot optimization methods either do not scale or misrepresent unbalanced three-phase networks. This paper proposes an iterative planning process that alternates between two problem classes with mature off-the-shelf solvers. An optimal power flow (OPF) problem quantifies the deficiency of a candidate portfolio by optimizing the dispatch of controllable devices and measuring thermal and voltage violations on a common scale, with an optimal value of zero exactly at adequate portfolios. The OPF multipliers give a first-order approximation of the relief each investment candidate would provide. We bound the error of this approximation and show that structural properties of the deficiency guard it against overestimation. Building on this property, a heuristic cutting-plane search uses a mixed-integer linear program to propose each next portfolio from these approximations. On six unbalanced three-phase feeders from the SMART-DS dataset, the search outperforms a linearized optimization-based benchmark, and the results agree across five variants of the deficiency.
\end{abstract}

\begin{IEEEkeywords}
Distribution network planning, expansion planning, optimal power flow, unbalanced three-phase networks, sensitivity analysis.
\end{IEEEkeywords}

\section{Introduction} \label{sec:intro}
Traditionally, electric distribution networks served as unidirectional, passive power-delivery systems, and utilities added, replaced, or upgraded their infrastructure as load grew. The ongoing emergence of electric vehicles, HVAC electrification, distributed photovoltaic (PV) generation, and load growth has sparked a costly buildout of distribution infrastructure. The US Energy Information Administration calls distribution spending ``the main driver of electricity spending increases over the last two decades'', far outpacing capital expenditures in generation and transmission \cite{eia2024grid}. The scale of these upgrades calls for principled planning methods to contain costs.

In practice, distribution planning proceeds as an engineering loop: planners run power flow studies of the network under a forecast peak load, locate the components whose operating limits are violated, and develop upgrades that mitigate each violation \cite{homer2020pnnl,pilo2012cigre}. California's investor-owned utilities, for example, publish an annual grid needs assessment listing each forecast deficiency by facility, quantified as an overload relative to the facility's rating or an exceedance of the voltage limits, to identify opportunities for non-wires alternatives to defer traditional upgrades \cite{pge2023gna}. Selecting the upgrades remains largely a matter of engineering judgment.

Optimization-based distribution system expansion planning (DSEP) \cite{rastgou2024,georgilakis2015review} instead couples long-term, typically binary, investment decisions with an operational power flow subproblem in a single mathematical program. Unlike transmission systems, which are approximately balanced and well approximated by linearized power flow models, distribution systems are highly unbalanced, often dominated by single-phase laterals, and exhibit non-negligible voltage drops and resistive losses. Existing formulations nonetheless adopt the second-order cone relaxation \cite{amjady2018} or the LinDistFlow approximation \cite{moreira2024} of the branch flow model, or other linearized network models \cite{tabares2016,munoz2016,munoz2018}, and all apply only to balanced single-phase equivalent networks. Section \ref{sec:formulation} introduces the DSEP problem and discusses the limitations of relaxations and linearizations for planning more broadly.

Several lines of work automate parts of the planning loop, for example by running power flows, iteratively upgrading each overloaded line or transformer to the next standard size, and adjusting voltage control settings to reduce the number of voltage violations \cite{horowitz2018cost,palmintier2021la100}. Metaheuristic planners for unbalanced secondary circuits score each candidate plan with a three-phase power flow and penalize constraint violations \cite{cossi2005,cossi2009,mendoza2013}. Heleno et al. \cite{heleno2024} alternate between a three-phase power flow that identifies upgrade candidates near violated limits and a balanced least-cost expansion model with slack variables. In each case, inadequacy is measured by a proxy suited to the problem at hand: a count of violations, a hand-weighted penalty, or the slacks of an approximate model. None offers, under the full network physics, a consistent measure of how far a plan is from adequate or an indication of which investments would bring it closer.

Motivated by contingency analysis and planning, Overbye \cite{overbye1994} proposed minimizing the residual of the power flow equations to measure how far a given load profile lies from solvability for a given network. Jereminov et al. \cite{jereminov2020} reinterpreted the residual as fictitious, penalized current sources at each node. Foster et al. \cite{foster2022} extended the idea to unbalanced three-phase distribution networks. The aforementioned authors suggest that the optimal solutions of these feasibility problems may help planners manually locate and compensate for deficiency in overloaded networks. Zamzam et al. \cite{zamzam2018} attached slacks to the voltage limits as well as the power balance of a multi-phase optimal power flow (OPF), so that the slacks point to the constraints that render it infeasible. Following this usage, we call the optimal value of such a slack-minimizing problem a \textit{deficiency}.

In summary, the existing literature does not yet offer a satisfactory approach to distribution system planning. One-shot optimization-based methods either do not scale or do not capture the unbalanced three-phase physics, while power-flow-based planning loops lack principled rules for choosing appropriate upgrades. Deficiency-based analysis is promising, but has yet to be integrated into planning. This work proposes an iterative planning process that mirrors existing utility practice, alternating between two problems with mature off-the-shelf solvers: an OPF instance that takes the place of the power flow study, quantifying the deficiency of a given portfolio and locating the violations; and a mixed-integer linear program that uses the sensitivity information encoded in the optimal multipliers of the OPF to propose the next portfolio. Specifically, our contributions are:
\begin{enumerate}
    \item We generalize and formalize the notion of deficiency (Section \ref{sec:theory}), arriving at a definition that allows for different sources of slack and objective forms and can be evaluated with off-the-shelf interior-point OPF solvers.
    \item We provide an interpretable upper bound on the error incurred by using the first-order approximation of a deficiency function to predict the deficiency at portfolios that have not been evaluated (Section \ref{sec:validity}).
    \item Motivated by this bound, we design a deficiency-guided cutting-plane portfolio search (Section \ref{sec:cutting}). In experiments on six unbalanced feeders from the SMART-DS dataset (Section \ref{sec:experiments}), the search outperforms a linearized one-shot benchmark, with consistent results across five deficiency variants.
\end{enumerate}

\textit{Notation:} $\mathbb{R}$ and $\mathbb{C}$ are the real and complex numbers, and $\mathbb{R}_+$ and $\mathbb{R}_{++}$ are the nonnegative and positive reals. $\mathbf{j}$ is the imaginary unit, $\operatorname{Re}$ and $\operatorname{Im}$ take real and imaginary parts, and $\bullet^\top$ and $\bullet^H$ are the transpose and conjugate transpose. $\lvert \cdot \rvert$ is the modulus, applied elementwise to vectors, or the cardinality of a set; $\lVert \cdot \rVert_p$ is the $p$-norm, with $\lVert \cdot \rVert \equiv \lVert \cdot \rVert_2$. $\mathbf{1}$ and $\mathbf{0}$ are the all-ones and all-zeros vectors of appropriate dimension, $\text{Diag}(x)$ is the diagonal matrix with diagonal $x$, and $\operatorname{proj}_{x^\perp} y \equiv y - x \, (x^\top y) / (x^\top x)$ is the component of $y$ orthogonal to $x$. $\operatorname{conv}$ denotes the convex hull, $f(t^+)$ and $f(t^-)$ denote the right and left limits of $f$ at $t$, $f'(t^\pm)$ denote the one-sided derivatives, and $\equiv$ denotes a definition.

\section{Distribution System Expansion Planning} \label{sec:formulation}
Consider an electric distribution system with a substation bus $0$ and a set of non-substation buses $\mathcal{N}$; let $\mathcal{N}_+ \equiv \mathcal{N} \cup \{0\}$ denote the set of all buses. Each bus $i \in \mathcal{N}_+$ has a set of phases $\Phi_i \subseteq \{a, b, c\}$ and hosts a (possibly empty) set of loads $\mathcal{D}_i$, where each load $m \in \mathcal{D}_i$ serves a subset of phases $\Phi_m \subseteq \Phi_i$; write $\mathcal{D}_{i,\phi} \equiv \left\{m \in \mathcal{D}_i : \phi \in \Phi_m\right\}$ for those serving phase $\phi$ and $\mathcal{D} \equiv \bigcup_{i \in \mathcal{N}} \mathcal{D}_i$ for the set of all loads. Each non-substation bus $i \in \mathcal{N}$ also hosts a (possibly empty) set of capacitor banks $\mathcal{C}_i$, each serving every phase of its bus; write $\mathcal{C} \equiv \bigcup_{i \in \mathcal{N}} \mathcal{C}_i$ for the set of all capacitor banks. Buses are connected by three pairwise disjoint sets of branches: lines $\mathcal{L}$, transformers $\mathcal{T}$, and voltage regulators $\mathcal{R}$, each a subset of $\mathcal{N}_+ \times \mathcal{N}_+$. Let $\mathcal{E} \equiv \mathcal{L} \cup \mathcal{T} \cup \mathcal{R}$ denote the set of branches (edges). By convention, edges point away from bus $0$ along a spanning tree rooted there. We assume that, under this convention, the phases of downstream buses are a subset of those of upstream buses: for all $i \to j \in \mathcal{E}$, $\Phi_j \subseteq \Phi_i$.

In general, every utility-owned device on the network is chosen from a menu of mutually exclusive installation candidates: define the set of \textit{investment targets} $\mathcal{I} \equiv \mathcal{E} \cup \mathcal{C}$ and, for each target $g \in \mathcal{I}$, the menu $\mathcal{G}_g$ of candidates from which exactly one is installed.
Formally,
\begin{subequations} \label{eq:binary_constraints}
\begin{constraintblock}
    \begin{flalign}
        \label{eq:integrality} & b_k \in \left\{0, 1\right\} && \forall g \in \mathcal{I}, k \in \mathcal{G}_g
    \end{flalign}
    \begin{flalign}
        \label{eq:single_upgrade} & \sum_{k \in \mathcal{G}_g} b_k = 1 && \forall g \in \mathcal{I}
    \end{flalign}
    \begin{flalign}
        \label{eq:side} & \Gamma b \leq \gamma &&
    \end{flalign}
\end{constraintblock}
\end{subequations}
where $b_k$ indicates that candidate $k \in \mathcal{G}_g$ is installed, and \eqref{eq:side} collects any further linear constraints on the portfolio, such as siting rules, through a matrix $\Gamma$ and a vector $\gamma$ of compatible dimensions. Let $\mathcal{B} \equiv \{b : \eqref{eq:binary_constraints}\}$ denote the set of feasible portfolios, and define the investment cost $C(b) \equiv \sum_{g \in \mathcal{I}} \sum_{k \in \mathcal{G}_g} c_k b_k$, where $c_k \in \mathbb{R}_+$ is the cost of candidate $k$.

The binary variables enter the network model through the device parameters, selected from the menus as
\begin{constraintblock}
    \begin{flalign*}
        & \left(Y_{ij}, \overline{I}_{ij}\right) \equiv \sum_{k \in \mathcal{G}_{ij}} b_k \left(Y_k, \overline{I}_k\right) && \forall i \to j \in \mathcal{L}
    \end{flalign*}
    \begin{flalign*}
        & \left(Y_{ij}, \overline{S}_{ij}\right) \equiv \sum_{k \in \mathcal{G}_{ij}} b_k \left(Y_k, \overline{S}_k\right) && \forall i \to j \in \mathcal{T}
    \end{flalign*}
    \begin{flalign*}
        & \left(Z_{ij}, \tau_{ij}\right) \equiv \sum_{k \in \mathcal{G}_{ij}} b_k \left(Z_k, \tau_k\right) && \forall i \to j \in \mathcal{R}
    \end{flalign*}
    \begin{flalign*}
        & \overline{B}_{m} \equiv \sum_{k \in \mathcal{G}_{m}} b_k \overline{B}_k && \forall m \in \mathcal{C}
    \end{flalign*}
\end{constraintblock}
$Y_{ij}$ is the series admittance of branch $i \to j$ ($\mathbb{C}^{\left|\Phi_j\right| \times \left|\Phi_j\right|}$ for a line, $\mathbb{C}$ for a transformer), $\overline{I}_{ij} \in \mathbb{R}_+$ is the ampacity of line $i \to j$, $\overline{S}_{ij} \in \mathbb{R}_+$ is the per-phase apparent power rating of transformer $i \to j$, $Z_{ij} \in \mathbb{C}$ is the series impedance of regulator $i \to j$, $\tau_{ij} \in \mathbb{R}_+$ is the maximum tap deviation of regulator $i \to j$, and $\overline{B}_m \in \mathbb{R}_+$ is the maximum susceptance of capacitor $m$. The same symbols with a candidate subscript $k \in \mathcal{G}_g$, $g \in \mathcal{I}$, are the corresponding candidate parameters. An uninstalled conductor or transformer candidate is an open circuit ($Y_k \to 0$), while an uninstalled regulator candidate is a short circuit ($Z_k \to 0$).

Every device parameter in the network model belongs to some candidate, so an existing device is modeled as a zero-cost candidate. Conductor and transformer menus typically include the existing device, while the menu of a regulator or capacitor without an existing device, a prospective installation site, includes a ``do-not-invest'' proxy candidate with $c_k = 0$ and $\tau_k = \varepsilon$, $Z_k = 0$ or $\overline{B}_k = \varepsilon$, respectively, for a small $\varepsilon > 0$ that keeps the device's dispatch bounds separated when it is not installed (required to satisfy the constraint qualification in Section \ref{sec:theory}). A target that is not a true potential investment has a singleton menu holding the existing device; a bus offering neither an existing bank nor an installation site has $\mathcal{C}_i = \emptyset$.

The following constraints encode the power flow physics:
\begin{subequations} \label{eq:power_flow}
\begin{constraintblock}
    \begin{flalign}
        \nonumber & \sum_{m \in \mathcal{D}_{i,\phi}} \tilde{I}_{m, \phi} - \sum_{m \in \mathcal{D}_{i,\phi} \cup \mathcal{C}_i} I_{m,\phi} \\
        \label{eq:kcl} & \quad = \sum_{j: j \to i} I_{ji,\phi} + \sum_{j:i \to j} I_{ij, \phi} && \forall i \in \mathcal{N}, \phi \in \Phi_i
    \end{flalign}
    \begin{flalign}
        \label{eq:line_ohms} & I_{ij} = Y_{ij}\left(V_i - V_j\right) && \forall i \to j \in \mathcal{L}
    \end{flalign}
    \begin{flalign}
        \label{eq:line_rev} & I_{ji,\phi} = -I_{ij,\phi} && \forall i \to j \in \mathcal{L}, \phi \in \Phi_j
    \end{flalign}
    \begin{flalign}
        \label{eq:xfmr_ohms} & I_{ij,\phi} = Y_{ij}\left(V_{i,\phi} - n_{ij}V_{j,\phi}\right) && \forall i \to j \in \mathcal{T}, \phi \in \Phi_j
    \end{flalign}
    \begin{flalign}
        \label{eq:xfmr_rev} & I_{ji,\phi} = -n_{ij}I_{ij,\phi} && \forall i \to j \in \mathcal{T}, \phi \in \Phi_j
    \end{flalign}
    \begin{flalign}
        \label{eq:reg_ohms} & V_{i,\phi} - \frac{V_{j,\phi}}{t_{ij,\phi}} = Z_{ij}I_{ij,\phi} && \forall i \to j \in \mathcal{R}, \phi \in \Phi_j
    \end{flalign}
    \begin{flalign}
        \label{eq:reg_rev} & I_{ji,\phi} = -\frac{I_{ij,\phi}}{t_{ij,\phi}} && \forall i \to j \in \mathcal{R}, \phi \in \Phi_j
    \end{flalign}
    \begin{flalign}
        \label{eq:load} & V_{i,\phi}I_{m,\phi}^H = S_{m,\phi}^\dagger && \forall i \in \mathcal{N}, m \in \mathcal{D}_i, \phi \in \Phi_m
    \end{flalign}
    \begin{flalign}
        \label{eq:capacitor_current} & I_{m,\phi} = \mathbf{j}B_mV_{i,\phi} && \forall i \in \mathcal{N}, m \in \mathcal{C}_i, \phi \in \Phi_i
    \end{flalign}
    \begin{flalign}
        \label{eq:slack_voltage} & V_{0,\phi} = V_{0,\phi}^\dagger && \forall \phi \in \Phi_0
    \end{flalign}
\end{constraintblock}
\end{subequations}
$V_{i, \phi} \in \mathbb{C}$ and $V_{i,\phi}^\dagger \in \mathbb{C}$ are the voltage and nominal voltage at bus $i \in \mathcal{N}$ and phase $\phi \in \Phi_i$. $I_{m,\phi} \in \mathbb{C}$, $S_{m,\phi}^\dagger \in \mathbb{C}$ and $\tilde{I}_{m,\phi} \in \mathbb{C}$ are the current demand, net complex power demand (subsuming distributed photovoltaic generation) and \textit{compensation current} injection of load $m$ at phase $\phi \in \Phi_m$. $I_{m,\phi} \in \mathbb{C}$ is the current drawn by capacitor $m \in \mathcal{C}_i$ at phase $\phi \in \Phi_i$. The compensation currents $\tilde{I}$ represent fictitious load shed (or augmentation), used by the deficiency programs in Section \ref{sec:deficiency}. $B_m \in \mathbb{R}$ is the dispatchable susceptance of capacitor $m \in \mathcal{C}_i$, fixed across phases. $I_{ij,\phi} \in \mathbb{C}$ and $I_{ji,\phi} \in \mathbb{C}$ are the currents drawn from the sending and receiving buses of branch $i \to j \in \mathcal{E}$ at phase $\phi \in \Phi_j$. $n_{ij} \in \mathbb{R}_{++}$ is the fixed nominal turns ratio of transformer $i \to j \in \mathcal{T}$, and $t_{ij,\phi} \in \mathbb{R}_{++}$ is the dispatchable tap ratio of regulator $i \to j \in \mathcal{R}$ at phase $\phi$. In \eqref{eq:line_ohms}, $V_i$, $V_j$ and $I_{ij}$ are vectors over the phases $\Phi_j$, ordered $a$, $b$, $c$ as in $Y_k$.
The devices are subject to dispatch bounds,
\begin{subequations} \label{eq:device_bounds}
\begin{constraintblock}
    \begin{flalign}
        \label{eq:capacitor_bound} & 0 \leq B_{m} \leq \overline{B}_{m} && \forall m \in \mathcal{C}
    \end{flalign}
    \begin{flalign}
        \label{eq:tap_range} & \left|t_{ij,\phi} - 1\right| \leq \tau_{ij} && \forall i \to j \in \mathcal{R}, \phi \in \Phi_j
    \end{flalign}
\end{constraintblock}
\end{subequations}
and the network is subject to operating limits on voltage, conductor current and transformer loading. Let $\overline{\mathcal{M}} \equiv \mathcal{N}_+ \cup \mathcal{L} \cup \mathcal{T}$ collect the components subject to an operating limit. The limits are imposed on a subset $\mathcal{M} \subseteq \overline{\mathcal{M}}$ of \textit{monitored} components, with $\mathcal{N}_{\mathcal{M}} \equiv \mathcal{N}_+ \cap \mathcal{M}$, $\mathcal{L}_{\mathcal{M}} \equiv \mathcal{L} \cap \mathcal{M}$ and $\mathcal{T}_{\mathcal{M}} \equiv \mathcal{T} \cap \mathcal{M}$:
\begin{subequations} \label{eq:inequalities}
\begin{constraintblock}
    \begin{flalign}
        \nonumber & \underline{V}_i - \sigma_{i,\phi}\left(\left|V_{i,\phi}^\dagger\right| - \underline{V}_i\right) \leq \left|V_{i,\phi}\right| \\
        \label{eq:volt_lims} & \quad \leq \overline{V}_i + \sigma_{i,\phi}\left(\overline{V}_i - \left|V_{i,\phi}^\dagger\right|\right) && \forall i \in \mathcal{N}_{\mathcal{M}}, \phi \in \Phi_i
    \end{flalign}
    \begin{flalign}
        \label{eq:current_lim} & \left|I_{ij,\phi}\right| \leq \left(1 + \sigma_{ij,\phi}\right)\overline{I}_{ij} && \forall i \to j \in \mathcal{L}_{\mathcal{M}}, \phi \in \Phi_j
    \end{flalign}
    \begin{flalign}
        \label{eq:apparent_power_lim} & \left|I_{ji,\phi}^HV_{j,\phi}\right| \leq \left(1 + \sigma_{ij,\phi}\right)\overline{S}_{ij} && \forall i \to j \in \mathcal{T}_{\mathcal{M}}, \phi \in \Phi_j
    \end{flalign}
\end{constraintblock}
\end{subequations}
$\underline{V}_i \in \mathbb{R}$ and $\overline{V}_i \in \mathbb{R}$ are the lower and upper voltage magnitude bounds at bus $i \in \mathcal{N}_+$, applied across phases. The \textit{exceedances} $\sigma \geq 0$, a second family of slack alongside $\tilde{I}$, are dimensionless and therefore commensurable across constraint types. Planning ordinarily imposes the limits of every component, $\mathcal{M} = \overline{\mathcal{M}}$; we write \eqref{eq:inequalities} for a generic monitored set so that the method of Sections \ref{sec:theory} and \ref{sec:cutting} can isolate the effect of specific limits.

Collecting the operational variables into $x \equiv \left(V, I, B, t, \tilde{I}, \sigma\right)$, we abbreviate the physics \eqref{eq:power_flow} as $h(x, b) = 0$ and the dispatch bounds and operating limits \eqref{eq:device_bounds}--\eqref{eq:inequalities} as $g_{\mathcal{M}}(x, b) \leq 0$. We represent each complex variable and equation by its real and imaginary parts, so that $h$ and $g_{\mathcal{M}}$ are real-valued functions of real variables. The full distribution system expansion planning problem is
\begin{subequations}  \label{eq:lodge}
\begin{align}
    \min_{x, b} \quad & C(b) \\
    \text{s.t.} \quad & h\left(x, b\right) = 0, \quad g_{\overline{\mathcal{M}}}\left(x, b\right) \leq 0 \\
    & \tilde{I} = \mathbf{0}, \quad \sigma = \mathbf{0} \\
    & b \in \mathcal{B}
\end{align}
\end{subequations}
For simplicity, and to match the devices present in the networks of Section \ref{sec:experiments}, \eqref{eq:lodge} is restricted to reconductoring, transformer upgrades, capacitor banks, and voltage regulators, and to a single planning load $S^\dagger$, which may represent the planner's worst-case coincident load. In practice, capacitor banks switch in discrete stages and regulator taps move in discrete steps; \eqref{eq:lodge} approximates both as continuous over their ranges. The framework of Section \ref{sec:theory} and its results naturally extend to other candidates, such as front-of-meter storage or demand response, and to multiple load scenarios. \eqref{eq:lodge} captures the coupling between the voltage and capacity problems. A larger conductor also lowers a line's impedance and, with it, the voltage drop, which may defer voltage regulation. Similarly, reactive injections from capacitor banks may reduce flow through upstream branches, potentially deferring capacity upgrades. See the results in Section \ref{sec:results} for examples of such trade-offs.

\eqref{eq:lodge} is a nonconvex mixed-integer nonlinear program, that is, even its continuous relaxation is nonconvex, since the power flow physics \eqref{eq:power_flow} are nonlinear and bilinear in the investment and operational variables. Off-the-shelf global solvers, notably spatial branch-and-bound implementations, solve such programs to a prescribed optimality tolerance when the variables are bounded \cite[Sec.~5]{belotti2013}. In practice, however, the effort of spatial branch-and-bound grows rapidly with the number of nonconvex terms, and global solvers do not scale to problems such as distribution system planning, which often feature thousands of continuous variables and hundreds of binary variables.

\eqref{eq:lodge} also fails to admit a faithful convex or linear surrogate. The classic branch flow model \cite{baran1989} and the associated second-order cone relaxation and linear LinDistFlow approximation \cite{low2014parti} assume a balanced network and are therefore only applicable to the trunk of the distribution network close to the substation. Gan and Low \cite{gan2014convex} extend the branch flow model to unbalanced three-phase networks, proposing a semidefinite relaxation and an associated Lin3DistFlow approximation, since extended to meshed networks \cite{wang2026linearized}. The semidefinite relaxation's only known sufficient conditions for exactness \cite{zhou2019sufficient} are prohibitively restrictive in practice, failing on every network studied here, and a mixed-integer semidefinite program presents major computational and implementation challenges. Lin3DistFlow's balanced-voltage approximation, combined with its omission of losses and linearized voltage drops, limits its fidelity on the heavily loaded and unbalanced feeders encountered in planning, where it can accept portfolios that violate limits under the full model (Section \ref{sec:results}). Other linearizations \cite{bernstein2017linear,bernstein2018load,girigoudar2021linearized} are unsuitable because they are taken around a fixed operating point, while their coefficients depend on the investment variables.

One-shot optimization, directly or through a surrogate, is therefore ill-suited to the three-phase DSEP problem \eqref{eq:lodge}. Instead, we separate its two sources of difficulty. We reinterpret and generalize the deficiency frameworks in \cite{overbye1994,jereminov2020,foster2022,zamzam2018} for planning by defining deficiency as a function of a candidate portfolio, evaluated under the full model \eqref{eq:power_flow}--\eqref{eq:inequalities} by standard OPF solvers, to isolate the nonlinear physics from the binary variables. We then propose a heuristic cutting-plane search in which a mixed-integer linear program uses the first-order information from each evaluation to propose the next portfolio. Section \ref{sec:theory} formalizes the notion of deficiency and develops the analysis underpinning the method, and Section \ref{sec:cutting} describes the search.

\section{Deficiency Framework} \label{sec:theory}

\ifarxiv
Proofs of the results in this section are given in Appendix \ref{app:proofs}.
\else
Proofs of the results in this section are given in the online appendix \cite{brock2026arxiv}.
\fi

\subsection{Deficiency} \label{sec:deficiency}

\eqref{eq:lodge} requires that any feasible portfolio $b$ admit a dispatch $t, B$ such that its constraints are satisfied with both slack families $\tilde{I}$ and $\sigma$ at zero. The same can be required of any monitored set $\mathcal{M} \subseteq \overline{\mathcal{M}}$ by imposing $g_{\mathcal{M}}$ in place of $g_{\overline{\mathcal{M}}}$.

\begin{definition}[Adequacy] \label{def:adequacy}
    A relaxed portfolio $b \in \operatorname{conv} \mathcal{B}$ is \textit{adequate for} $\mathcal{M} \subseteq \overline{\mathcal{M}}$ if there exists $x$ with $h(x, b) = 0$, $g_{\mathcal{M}}(x, b) \leq 0$, $\tilde{I} = \mathbf{0}$ and $\sigma = \mathbf{0}$.
\end{definition}

One can measure the degree of inadequacy of a given portfolio with respect to a given set of monitored components by fixing the portfolio, freeing either slack family $\tilde{I}$ or $\sigma$, and minimizing the slack required to restore feasibility. Definition \ref{def:deficiency} formalizes and generalizes this notion.

\begin{definition}[Deficiency variant] \label{def:deficiency}
    A collection of programs
    \begin{subequations} \label{eq:deficiency_program}
    \begin{align}
        v_{\mathcal{M}}^*(b) \equiv \min_{z} \quad & F_{\mathcal{M}}(z) \\
        \text{s.t.} \quad & \lambda: H_{\mathcal{M}}\left(z, b\right) = 0 \\
        & \mu: G_{\mathcal{M}}\left(z, b\right) \leq 0
    \end{align}
    \end{subequations}
    with Lagrange multipliers $\lambda$ and $\mu$ is called a \textit{deficiency variant} if, for every $\mathcal{M} \subseteq \overline{\mathcal{M}}$,
    \begin{conditions}
        \item \label{def:deficiency:contract} $v_{\mathcal{M}}^*(b) \geq 0$ for every $b \in \operatorname{conv} \mathcal{B}$, with equality if and only if $b$ is adequate for $\mathcal{M}$;
        \item \label{def:deficiency:polynomial} $F_{\mathcal{M}}$, $G_{\mathcal{M}}$ and $H_{\mathcal{M}}$ are polynomial in $(z, b)$;
        \item \label{def:deficiency:homogeneous} $F_{\mathcal{M}}$ is convex and positively homogeneous of a degree $s \geq 1$ common to every $\mathcal{M}$.
    \end{conditions}
    We call \eqref{eq:deficiency_program} the variant's \textit{deficiency program for} $\mathcal{M}$ and $v_{\mathcal{M}}^*$ its \textit{optimal value function}.
\end{definition}

Definition \ref{def:deficiency} supports equivalent representations of the 1-, 2- and $\infty$-norms over either slack family. Table \ref{tab:instances} provides the implementation of the five such combinations studied here, named for their slack family (C: compensation, E: exceedance) and norm, including the necessary epigraph variables $r$ and $\varsigma$ and monotonic objective transformations (the squared objectives of C2 and E2).\footnote{The $\infty$-norm compensation variant, C$\infty$, also satisfies Definition \ref{def:deficiency}; we omit it because its non-maximal compensations are unpenalized free inputs, which leaves the optimum degenerate and the local solver unreliable.} The compensation variants penalize the \textit{nominal compensation power}, given for phase $\phi \in \Phi_m$ of load $m \in \mathcal{D}_i$ by $\tilde{S}_{m,\phi} \equiv V_{i,\phi}^\dagger \tilde{I}_{m,\phi}^H$ and represented in real coordinates by $\tilde{r} \equiv (\operatorname{Re} \tilde{S}, \operatorname{Im} \tilde{S})$. The theory to follow applies to any single, fixed variant satisfying Definition \ref{def:deficiency}. We assume throughout that \eqref{eq:deficiency_program} is feasible at every $b$ in a neighborhood of $\operatorname{conv} \mathcal{B}$ within its affine hull, which holds trivially for the compensation family (compensation currents can shed all load) and for the exceedance family whenever the power flow physics admit a solution.

\begin{table}[!t]
\caption{Deficiency variants}
\label{tab:instances}
\centering
\renewcommand{\arraystretch}{1.2}
\begin{tabular}{lccllc}
\hline
 & $z$ & $F_{\mathcal{M}}(z)$ & $H_{\mathcal{M}}(z, b)$ & $G_{\mathcal{M}}(z, b)$ & $s$ \\
\hline
C2 & $x$ & $\lVert \tilde{r} \rVert_2^2$ & $h(x, b)$ & $g_{\mathcal{M}}(x, b)$ & 2 \\
 & & & $\sigma$ & & \\
\hline
C1 & $(x, r)$ & $\mathbf{1}^\top r$ & $h(x, b)$ & $g_{\mathcal{M}}(x, b)$ & 1 \\
 & & & $\sigma$ & $\pm \tilde{r} - r$ & \\
\hline
E2 & $x$ & $\lVert \sigma \rVert_2^2$ & $h(x, b)$ & $g_{\mathcal{M}}(x, b)$ & 2 \\
 & & & $\tilde{I}$ & & \\
\hline
E1 & $x$ & $\mathbf{1}^\top \sigma$ & $h(x, b)$ & $g_{\mathcal{M}}(x, b)$ & 1 \\
 & & & $\tilde{I}$ & & \\
\hline
E$\infty$ & $(x, \varsigma)$ & $\varsigma$ & $h(x, b)$ & $g_{\mathcal{M}}(x, b)$ & 1 \\
 & & & $\tilde{I}$ & $\sigma - \varsigma \mathbf{1}$ & \\
\hline
\end{tabular}
\end{table}

\subsection{Solution Branches} \label{sec:branches}

For the rest of this section, we fix a monitored set $\mathcal{M} \subseteq \overline{\mathcal{M}}$ and drop the subscript $\mathcal{M}$ from every quantity that depends on it. The assumption and results below apply to each $\mathcal{M}$; Section \ref{sec:cutting} restores the subscript where monitored sets are compared.

A deficiency program \eqref{eq:deficiency_program} is nonconvex in general and may have multiple local minimizers at a given $b$. With the portfolio fixed, however, even at a fractional $b \in \operatorname{conv} \mathcal{B}$, each program in Table \ref{tab:instances} is an OPF instance, in which $b$ enters only through the device parameters, and local solvers have been shown to reliably find near-optimal solutions of OPF problems \cite{pglib2019}. Assumption \ref{assumption:branch} formalizes the expectation that such a solver tracks a consistent local solution under a shifting portfolio, an investment-parametrized analog of the hypothesis behind continuation power flow \cite{ajjarapu1992}.

\begin{assumption}[Solution branch] \label{assumption:branch}
    There is a set $U \supseteq \operatorname{conv} \mathcal{B}$, open in the affine hull $\mathcal{A}$ of $\operatorname{conv} \mathcal{B}$, and a \textit{solution branch} $\zeta$ on $U$ such that
    \begin{conditions}
        \item \label{def:regular:minimizer} $\zeta(b)$ is a local minimizer of \eqref{eq:deficiency_program} at every $b \in U$;
        \item \label{def:regular:mfcq} the Mangasarian--Fromovitz constraint qualification holds at $\zeta(b)$ for every $b \in U$;
        \item \label{def:regular:semialgebraic} $\zeta$ is semialgebraic and locally Lipschitz.
    \end{conditions}
    With $L(z, \lambda, \mu, b) \equiv F(z) + \lambda^\top H(z, b) + \mu^\top G(z, b)$ the Lagrangian and
    \begin{align} \label{eq:multiplier_set}
        \Lambda(b) \equiv \big\{ (\lambda, \mu) :\; & \nabla_z L\left(\zeta(b), \lambda, \mu, b\right) = 0, \nonumber \\
        & \mu \geq 0,\; \mu^\top G\left(\zeta(b), b\right) = 0 \big\}
    \end{align}
    the set of Lagrange multipliers at $\zeta(b)$, we have oracle access to $\zeta(b)$ and to an arbitrary element of $\Lambda(b)$ at every $b \in \operatorname{conv} \mathcal{B}$.
\end{assumption}

Notice that the branch $\zeta$ is defined on a neighborhood of $\operatorname{conv} \mathcal{B}$ in $\mathcal{A}$, not only at the lattice points $\mathcal{B}$, which property \ref{def:deficiency:polynomial} of Definition \ref{def:deficiency} permits, so that every segment between portfolios lies in its domain.

Intuitively, Assumption \ref{assumption:branch} idealizes a local solver by $\zeta$: its Lipschitz requirement (condition \ref{def:regular:semialgebraic}) asserts that the solver tracks one sheet of local solutions, without basin-hopping, that the sheet does not fold (a saddle-node bifurcation) over the relaxed portfolio set, and that the minimizer does not jump across a flat face (as at a tie between two vertices of a linear program). Condition \ref{def:regular:mfcq}, the Mangasarian--Fromovitz qualification, holds for the slack-based programs in Table \ref{tab:instances}, because the slacks supply a direction that strictly relaxes every active limit. For the semialgebraic part of condition \ref{def:regular:semialgebraic}, the data of \eqref{eq:deficiency_program} are polynomial, so the local minimizers form finitely many semialgebraic sheets \cite[Prop.~2.2.4, Prop.~2.9.10]{bochnak1998} and a solver that tracks a single sheet returns a semialgebraic branch.

Since the optimal value function $v^*$ is inaccessible, we use the branch's value function $v \equiv F \circ \zeta$ as a proxy to classify portfolios as adequate or inadequate for $\mathcal{M}$. Assumption \ref{assumption:branch} posits that $v$ gives a consistent, continuous deficiency landscape over the relaxed portfolio set. Since the landscape lies above $v^*$, $v$ correctly classifies every inadequate portfolio, and any portfolio with $v(b) = 0$ is \textit{certified} as adequate for $\mathcal{M}$. Whether $v$ is a useful classifier depends on its tendency to report false negatives, that is, to fail to certify an adequate portfolio. Certifying every adequate portfolio of $\operatorname{conv} \mathcal{B}$ is a weaker requirement than global optimality, in particular because suboptimality at an inadequate $b$ still yields a correct classification. Moreover, false negatives are partially checkable: \textit{any} feasible point of \eqref{eq:deficiency_program} with zero objective certifies adequacy, so a practitioner can verify adequacy of some oversized portfolios by power-flow analysis under a heuristic dispatch rule and confirm that $\zeta$ certifies them, exposing any false negatives.

Proposition \ref{prop:sensitivity} shows that the multipliers provided along with $\zeta$ in Assumption \ref{assumption:branch} (and typically returned by off-the-shelf solvers such as Ipopt \cite{wachter2006}) also give the gradient of $v$.

\begin{proposition}[Sensitivity] \label{prop:sensitivity}
    Under Assumption \ref{assumption:branch}, $v$ is differentiable at every $b \in U$ outside a semialgebraic set of lower dimension than $U$, and at every $b$ at which it is differentiable,
    \begin{equation} \label{eq:sensitivity}
        \nabla v(b) = \frac{\partial H}{\partial b}\left(\zeta(b), b\right)^{\!\top}\!\lambda + \frac{\partial G}{\partial b}\left(\zeta(b), b\right)^{\!\top}\!\mu
    \end{equation}
    for every $(\lambda, \mu) \in \Lambda(b)$, with both sides read as linear functionals on the directions of $\mathcal{A}$.
\end{proposition}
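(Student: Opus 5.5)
The proof has two parts: generic differentiability from semialgebraic geometry, and an envelope-type argument that identifies the gradient with the Lagrangian's partial derivative in $b$.

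\emph{Differentiability.} Since $F$ is polynomial and $\zeta$ is semialgebraic and locally Lipschitz on $U$ (Assumption \ref{assumption:branch}\ref{def:regular:semialgebraic}), $v = F\circ\zeta$ is semialgebraic and locally Lipschitz on $U$. I would invoke the standard fact that a semialgebraic function is $C^1$ off a semialgebraic subset of strictly lower dimension, for example by stratification or cell decomposition \cite{bochnak1998}. Applying it to $\zeta$ and $v$, both are differentiable on $U\setminus S$ with $\dim S<\dim U$. (Rademacher's theorem gives only almost-everywhere differentiability; the semialgebraic structure supplies the dimension statement.)

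\emph{Gradient formula.} Fix $b\in U$ at which $v$ is differentiable, a direction $d$ in the linear space parallel to $\mathcal{A}$, and $(\lambda,\mu)\in\Lambda(b)$. I would compare $v$ with the Lagrangian along the curve $\zeta(b+\epsilon d)$, which is feasible for the perturbed program. For $\epsilon>0$,
\begin{align*}
v(b+\epsilon d) &\ge F(\zeta(b+\epsilon d)) + \lambda^\top H(\zeta(b+\epsilon d),b+\epsilon d)\\
&\quad + \mu^\top G(\zeta(b+\epsilon d),b+\epsilon d) = L(\zeta(b+\epsilon d),\lambda,\mu,b+\epsilon d).
\end{align*}
The inequality holds because $H=0$ at the perturbed point, $G\le 0$ there, and $\mu\ge 0$. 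At $\epsilon=0$ it is an equality, since $H=0$ and complementary slackness gives $\mu^\top G=0$. So $\epsilon\mapsto v(b+\epsilon d)-L(\zeta(b+\epsilon d),\lambda,\mu,b+\epsilon d)$ is nonnegative and vanishes at $0$.

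To differentiate at $\epsilon=0$ I need a derivative of $\zeta$ along $d$, or a substitute. Since $\zeta$ is locally Lipschitz, $L$ is $C^1$, and $\nabla_z L=0$ at $(\zeta(b),\lambda,\mu,b)$, a first-order expansion gives
\[
L(\zeta(b+\epsilon d),\lambda,\mu,b+\epsilon d) = v(b) + \epsilon\,\nabla_b L^\top d + o(\epsilon).
\]
The $z$-increment is $O(\epsilon)$ and is multiplied by a vanishing gradient, so no derivative of $\zeta$ is needed. Together with the nonnegativity above, this yields $v'(b;d)\ge \nabla_b L^\top d$ from $\epsilon\to0^+$. Running the same argument with $\epsilon<0$, i.e.\ direction $-d$, gives $v'(b;-d)\ge-\nabla_b L^\top d$. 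Because $v$ is differentiable at $b$, $v'(b;-d)=-\nabla v(b)^\top d$, so $\nabla v(b)^\top d=\nabla_b L^\top d$. Finally, $\nabla_b L = \partial_b H^\top\lambda+\partial_b G^\top\mu$, since $F$ does not depend on $b$; this is \eqref{eq:sensitivity}.

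The argument uses only Lipschitz continuity and the KKT conditions, and it holds for every multiplier in $\Lambda(b)$. MFCQ is needed only to guarantee $\Lambda(b)\neq\emptyset$; it is not needed for uniqueness. The main obstacle is the first step: I must cite the semialgebraic generic-differentiability result carefully, stating it on $U$ as an open subset of the affine space $\mathcal{A}$ and taking dimension relative to $\mathcal{A}$. The rest is the two-sided squeeze above. I should also check that $b+\epsilon d\in U$ for small $|\epsilon|$, which holds because $U$ is open in $\mathcal{A}$.
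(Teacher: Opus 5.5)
Your proof is correct. The differentiability half matches the paper's: identify $\mathcal{A}$ with $\mathbb{R}^m$, apply $C^1$ cell decomposition to $\zeta$, and compose with the polynomial $F$. The gradient half reaches the same conclusion more economically.

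The paper first proves a full directional-derivative formula, $v'(b;d)=\max_{(\lambda,\mu)\in\Lambda(b)}\partial_b L\,d$ at every $b\in U$ (Lemma \ref{lem:directional}). At a differentiability point it then reads off that the maximum over $\Lambda(b)$ (direction $d$) and the minimum (direction $-d$) coincide. That lemma's lower bound is exactly your weak-duality inequality $v(b+\epsilon d)\ge L(\zeta(b+\epsilon d),\lambda,\mu,b+\epsilon d)$ followed by an expansion with $\nabla_z L=0$. Its upper bound needs more: take multipliers in $\Lambda(b+\tau d)$, show via MFCQ that they stay bounded as $\tau\downarrow 0$, and pass to a limit point in $\Lambda(b)$.

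You observe that once $v$ is known to be differentiable at $b$, the lower bound applied in the two directions $\pm d$ already squeezes $\nabla v(b)^\top d$ against $\partial_b L\,d$ for each individual multiplier. The upper-bound machinery is therefore unnecessary, and MFCQ enters only to make $\Lambda(b)$ nonempty.

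The paper's route is needed at points where $v$ is not differentiable. Proposition \ref{prop:atomic} uses $v'(t_i^{+})=\max_{\Lambda(t_i)}\partial_t L$ and $v'(t_i^{-})=\min_{\Lambda(t_i)}\partial_t L$ at events, which your squeeze cannot deliver. So the paper proves the lemma once and uses it for both results, while for Proposition \ref{prop:sensitivity} alone your argument is the shorter one.
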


Proposition \ref{prop:sensitivity} parallels but departs from classical sensitivity analysis \cite{fiacco1983,bertsekas2016nonlinear}, which assumes the linear-independence qualification, second-order sufficiency, and strict complementarity at a local solution to derive a unique, differentiable branch with a unique multiplier. However, this branch exists only in a neighborhood of the solution, and strict complementarity fails at active-set changes as the portfolio varies; comparing portfolios requires a consistent branch that spans the relaxed portfolio set. Assumption \ref{assumption:branch} therefore posits the branch and requires only the weaker Mangasarian--Fromovitz qualification, which is why \eqref{eq:sensitivity} holds for every multiplier and only where $v$ is differentiable.

The branch $\zeta$, then, allows us to evaluate a single portfolio at a time and access zero- and first-order information at that portfolio: how deficient it is and, through the gradient, how much relief each candidate offers, which a planner otherwise estimates by trial power flows. We therefore study the first-order approximation of the \textit{deficiency function} $\delta \equiv v^{1/s}$. Since $F$ is positively homogeneous of degree $s$, $\delta$ grows linearly with the slacks and shares their units. For the instances in Table \ref{tab:instances}, $\delta$ is the norm of the slacks itself rather than the square, so variants of the same slack family share a scale. By Proposition \ref{prop:sensitivity}, $v$ is differentiable except on a set of lower dimension than $U$. A portfolio in $\mathcal{B}$ falls in this lower-dimensional set only by coincidence, such as when a candidate is sized exactly to a binding limit; we assume that no such coincidence occurs. The first-order deficiency approximation at a portfolio $\bar{b} \in \mathcal{B}$ with $\delta(\bar{b}) > 0$ is
\begin{align*}
    \hat{\delta}_{\bar{b}}\left(b\right) & \equiv \delta\left(\bar{b}\right) + \nabla\delta\left(\bar{b}\right)^\top\left(b - \bar{b}\right) \\
    & = v\left(\bar{b}\right)^{1/s} + \tfrac{1}{s}\, v\left(\bar{b}\right)^{1/s - 1}\, \nabla v\left(\bar{b}\right)^\top\left(b - \bar{b}\right).
\end{align*}

\subsection{Approximation Error} \label{sec:validity}

Under Assumption \ref{assumption:branch}, fix a portfolio $\bar{b} \in \mathcal{B}$ with $\delta(\bar{b}) > 0$ and a portfolio $b \in \mathcal{B}$, $b \neq \bar{b}$, and let $d \equiv b - \bar{b}$. We are interested in the error incurred by estimating the deficiency of $b$ with the first-order approximation formed at $\bar{b}$. We define the approximation error $\Delta \equiv \hat{\delta}_{\bar{b}}(b) - \delta(b)$ and characterize it by analyzing the one-dimensional cross-section of $\delta$ between $\bar{b}$ and $b$. For clarity, we abuse notation throughout this subsection: every function defined over $\operatorname{conv} \mathcal{B}$ is reparameterized over $t \in [0, 1]$ along the segment, for example $\delta(t) \equiv \delta(\bar{b} + td)$.

We begin with two foundational lemmas. Define $\varphi \equiv F^{1/s}$, such that $\delta(t) = \varphi(\zeta(t))$. Lemma \ref{lem:phi_convex} establishes convexity of $\varphi$.

\begin{lemma}[Convexity] \label{lem:phi_convex}
    $\varphi$ is convex.
\end{lemma}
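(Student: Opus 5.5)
The plan is to prove the lemma using only property \ref{def:deficiency:homogeneous} of Definition \ref{def:deficiency}: $F$ is convex and positively homogeneous of degree $s \geq 1$. We may also use $F \geq 0$, which the instances in Table \ref{tab:instances} satisfy. In general it follows from convexity and homogeneity: if $F(z) < 0$ for some $z$, then $F(\alpha z) = \alpha^s F(z) \to -\infty$ as $\alpha \to \infty$. Convexity would then force $F$ to be unbounded below along every line, and this is incompatible with property \ref{def:deficiency:contract} and the attainment of the minimum. The cleanest fix is to note that $\varphi = F^{1/s}$ is only defined when $F \geq 0$, and to take that as given.

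The main step is the standard fact that the $s$-th root of a nonnegative convex function that is positively homogeneous of degree $s$ is convex. We would prove it through the sublevel sets. Let $K \equiv \{z : F(z) \leq 1\}$, which is convex because $F$ is. By homogeneity, for $\alpha > 0$ we have $F(z) \leq \alpha^s$ exactly when $F(z/\alpha) \leq 1$, that is, when $z \in \alpha K$. Hence $\varphi(z) = \inf\{\alpha > 0 : z \in \alpha K\}$, which is the gauge (Minkowski functional) of the convex set $K$, a set that contains the origin since $F(0)=0$.

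Next we show that a gauge of a convex set containing $0$ is sublinear. It is positively homogeneous of degree $1$, which also follows directly from $\varphi(\alpha z) = (\alpha^s F(z))^{1/s} = \alpha\varphi(z)$. For subadditivity, take $z_1 \in \alpha_1 K$ and $z_2 \in \alpha_2 K$. Then
\[
z_1 + z_2 = (\alpha_1+\alpha_2)\left(\tfrac{\alpha_1}{\alpha_1+\alpha_2}\tfrac{z_1}{\alpha_1} + \tfrac{\alpha_2}{\alpha_1+\alpha_2}\tfrac{z_2}{\alpha_2}\right) \in (\alpha_1+\alpha_2)K,
\]
and taking infima over $\alpha_1$ and $\alpha_2$ gives $\varphi(z_1+z_2) \leq \varphi(z_1)+\varphi(z_2)$. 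Sublinearity then yields convexity, since $\varphi(\theta z_1 + (1-\theta) z_2) \leq \theta\varphi(z_1) + (1-\theta)\varphi(z_2)$.

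The one delicate point is the boundary case where $\varphi(z_i) = 0$, so that the infimum defining the gauge is not attained at a positive $\alpha$. We handle it by using $\alpha_i = \varphi(z_i) + \epsilon$, which is admissible because $F(z_i) \leq \varphi(z_i)^s < (\varphi(z_i)+\epsilon)^s$, and then letting $\epsilon \to 0$. This works uniformly whether or not $\varphi(z_i)$ is zero, so no separate case analysis is needed.
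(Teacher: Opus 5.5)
Your sublevel-set argument for the main step is the same as the paper's. The paper also uses $K \equiv \{z : F(z) \leq 1\}$, the equivalence $z \in \kappa K \iff \varphi(z) \leq \kappa$, and levels $a > \varphi(x)$, $c > \varphi(y)$. It applies convexity of $K$ directly to $\theta x + (1-\theta) y \in (\theta a + (1-\theta)c) K$ instead of going through subadditivity, which is an immaterial difference.

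\textbf{The gap is in the nonnegativity of $F$.} Your identification of $\varphi$ with the gauge of $K$ depends on it, and both of your justifications for it fail.
\begin{itemize}
\item The instances in Table \ref{tab:instances} do not all satisfy $F \geq 0$. C1, E1 and E$\infty$ have the linear objectives $\mathbf{1}^\top r$, $\mathbf{1}^\top \sigma$ and $\varsigma$. These are negative on half-spaces of $z$-space and are kept nonnegative only by the constraints.
\item Nonnegativity does not follow from convexity and homogeneity when $s = 1$: every linear function is a counterexample. It is also not true that a convex function unbounded below along one ray is unbounded below along every line. And property \ref{def:deficiency:contract} constrains only the minimum over the feasible set, not $F$ on all of $z$-space.
\end{itemize}
When $s = 1$ and $F(z) < 0$, we have $F(z/\alpha) < 0 \leq 1$ for every $\alpha > 0$. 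Your gauge therefore evaluates to $0 \neq F(z) = \varphi(z)$. So ``taking $F \geq 0$ as given'' silently drops exactly the $s = 1$ variants the paper relies on, and for them your proof does not establish the lemma.

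\textbf{The repair is the paper's case split.} For $s = 1$, $\varphi = F$ is convex by hypothesis and there is nothing to prove. For $s > 1$, the restriction $\kappa \mapsto F(\kappa z) = \kappa^s F(z)$ to $\kappa > 0$ is convex, whereas $\kappa^s c$ is strictly concave when $c < 0$. Hence $F(z) \geq 0$ for every $z$.

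This is the correct form of your ``$\to -\infty$'' intuition. A convex function on a ray has an affine minorant, so it cannot decrease superlinearly, though it can decrease linearly; that is why the argument needs $s > 1$ and why it never invokes property \ref{def:deficiency:contract}. With nonnegativity established for $s > 1$, the rest of your proof goes through as written.
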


An \textit{event} is a time $t \in (0, 1)$ at which $\zeta(t)$ fails to be analytic or $\delta(t)$ has an isolated zero. For the programs in Table \ref{tab:instances}, an event is typically a change of the active set.

\begin{lemma}[Finitely many events] \label{lem:finite}
    Under Assumption \ref{assumption:branch}, there are finitely many events, and $\delta(t)$ and $\zeta(t)$ are Lipschitz on $[0, 1]$ with finite one-sided derivatives everywhere.
\end{lemma}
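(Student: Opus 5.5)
We prove Lemma \ref{lem:finite} from the semialgebraic structure in Assumption \ref{assumption:branch}, transferred to the segment. The map $t \mapsto \bar{b} + td$ is affine. Since $\operatorname{conv}\mathcal{B}$ is convex, the segment $[\bar{b}, b]$ lies in $\operatorname{conv}\mathcal{B} \subseteq U$, and because $U$ is open in $\mathcal{A}$ it contains the segment for $t$ in an open interval $J \supset [0,1]$. Thus $\zeta(t)$ is a semialgebraic, locally Lipschitz map on $J$, and since $[0,1]$ is compact it is Lipschitz on $[0,1]$. Next, $F$ is polynomial by condition \ref{def:deficiency:polynomial}, so $v(t) = F(\zeta(t))$ is semialgebraic and locally Lipschitz. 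Finally, $\delta = v^{1/s}$ is semialgebraic, as the composition of semialgebraic functions, since $x \mapsto x^{1/s}$ on $\mathbb{R}_+$ is semialgebraic for real $s\ge 1$ only when $s$ is rational. That holds for every variant in Table \ref{tab:instances}, where $s \in \{1,2\}$. For general $s$ I would instead argue with $\varphi$ directly: by Lemma \ref{lem:phi_convex}, $\varphi$ is convex and finite on all of $z$-space, hence locally Lipschitz, so $\delta = \varphi\circ\zeta$ is Lipschitz on $[0,1]$ regardless of rationality.

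For finiteness of events, I will use the standard fact that a one-variable semialgebraic function is analytic (Nash) outside finitely many points. This follows from the monotonicity/cell decomposition theorem, or from \cite[Prop.~2.9.10]{bochnak1998} together with the fact that semialgebraic functions on intervals are piecewise Nash. Applying this componentwise to $\zeta$ shows that $\zeta$ fails to be analytic at only finitely many $t$. On each of the finitely many open intervals between such points, $\zeta$ is analytic.

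Whether $\delta$ has finitely many isolated zeros needs more care. The zero set of $v$ is semialgebraic in one variable, hence a finite union of points and intervals, so it has finitely many isolated points. This argument does not require rationality of $s$, since $\delta$ and $v$ share zeros.

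For the one-sided derivatives, a one-variable semialgebraic function has germs at each point given by Puiseux series. Hence $\zeta'(t^\pm)$ exists in $\mathbb{R}\cup\{\pm\infty\}$, and it is finite by the Lipschitz bound. For $\delta$ I would use the chain rule with directional derivatives. A convex finite $\varphi$ is directionally differentiable, and it is Lipschitz near $\zeta(t)$. Write $\zeta(t+h) = \zeta(t) + h\zeta'(t^+) + o(h)$. Then $\delta'(t^+) = \varphi'(\zeta(t); \zeta'(t^+))$, which is finite.

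The main obstacle is the rigorous justification that $\zeta$ restricted to the segment is semialgebraic and that its analytic failures are finite. Semialgebraicity is preserved under composition with the affine parametrization, so the restriction is fine. For finiteness, I would cite the piecewise-Nash structure of semialgebraic curves rather than reprove it.
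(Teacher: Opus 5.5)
Your proof is correct. For the event count and the Lipschitz claims it uses the same ingredients as the paper. The paper derives the piecewise-Nash property of $\zeta$ on the segment from the Nash-submanifold decomposition of its graph and the finitely many critical values of its projection to the $t$-axis; you cite this as a standard fact. The paper also gets a global Lipschitz constant for $\varphi$ from subadditivity (convexity plus degree-one homogeneity), where you use local Lipschitz continuity of a finite convex function on the compact image $\zeta([0,1])$.

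The genuine difference is in the one-sided derivatives. The paper works on each stretch between events, where $\delta$ is analytic with a bounded semialgebraic derivative. There, o-minimality forces $\delta'$ to converge at the ends of the stretch, and the mean value theorem identifies the limit with $\delta'(t^\pm)$. You instead read $\zeta'(t^\pm)$ off Puiseux expansions, with the Lipschitz bound ruling out exponents below one. You then obtain $\delta'(t^+) = \varphi'(\zeta(t); \zeta'(t^+))$ from the directional differentiability and local Lipschitz continuity of the convex $\varphi$. Your route gives an explicit formula and never needs $\delta$ itself to be semialgebraic. The paper's route yields the additional fact that $\delta'(t_i^\pm)$ are the limits of $\delta'$ from inside the adjacent stretches. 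The proof of Proposition \ref{prop:accounting} draws exactly this fact from ``Lemma \ref{lem:finite} and its proof'' for its integration by parts.

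Your worry about irrational $s$ is unnecessary. A nonzero polynomial $F$ that is positively homogeneous of degree $s$ forces $s$ to be an integer, and $F \equiv 0$ is trivial. So $\delta = v^{1/s}$ is always semialgebraic. Applying your Puiseux argument directly to $\delta$ would then also recover the paper's stronger conclusion about the limits of $\delta'$.
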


Lemma \ref{lem:finite} establishes that $\delta(t)$ is piecewise smooth: continuous on $[0, 1]$ and analytic except at the finitely many events $t_1 < \dots < t_K$, where its slope may jump by $J_i \equiv \delta'(t_i^+) - \delta'(t_i^-)$. Proposition \ref{prop:accounting} characterizes the approximation error in terms of these slope jumps and the curvature between events.

\begin{proposition}[Error accounting] \label{prop:accounting}
    Under Assumption \ref{assumption:branch},
    \begin{equation} \label{eq:accounting}
        \Delta = -\Delta_{\mathrm{A}} - \Delta_{\mathrm{N}} - \Delta_{\mathrm{R}} ,
    \end{equation}
    where the \textit{atomic term}, the \textit{non-atomic term} and the \textit{response} are
    \begin{align*}
        \Delta_{\mathrm{A}} & \equiv \sum_i (1 - t_i) J_i , \\
        \Delta_{\mathrm{N}} & \equiv \int_{\Omega} (1-\xi)\, \dot{\zeta}(\xi)^\top \nabla^2\varphi\left(\zeta(\xi)\right) \dot{\zeta}(\xi)\, \mathrm{d}\xi , \\
        \Delta_{\mathrm{R}} & \equiv \int_{\Omega} (1-\xi)\, \nabla\varphi\left(\zeta(\xi)\right)^\top \ddot{\zeta}(\xi)\, \mathrm{d}\xi ,
    \end{align*}
    with $\Omega \equiv \{t \in [0, 1] \setminus \{t_1, \dots, t_K\} : \delta(t) > 0\}$, and both integrals are finite.
\end{proposition}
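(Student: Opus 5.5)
The plan is to write $\Delta$ through a Taylor formula with integral remainder for the piecewise-smooth scalar function $\delta(t)$ on $[0,1]$, and then split the second-derivative measure of $\delta$ into its atoms and its absolutely continuous part.

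\textbf{Reduction to the slope.} Since $\hat\delta_{\bar b}(b) = \delta(0) + \delta'(0)$, we have $\Delta = \delta(0)+\delta'(0)-\delta(1)$. Note that $\delta'(0)$ is the one-sided derivative $\delta'(0^+)$, which equals $\nabla\delta(\bar b)^\top d$ because $v$ is assumed differentiable at $\bar b$ and $\delta(\bar b)>0$. By Lemma \ref{lem:finite}, $\delta$ is Lipschitz, so $\delta(1)-\delta(0)=\int_0^1 \delta'(\xi)\,\mathrm d\xi$. Writing $\delta'(\xi) = \delta'(0^+) + [\delta'(\xi)-\delta'(0^+)]$ gives
\begin{equation*}
\Delta = -\int_0^1 \bigl(\delta'(\xi)-\delta'(0^+)\bigr)\,\mathrm d\xi .
\end{equation*}
On each open interval between consecutive events, $\delta'$ is analytic. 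Across an event $t_i$ it jumps by $J_i$. Hence, for $\xi$ not an event,
\begin{equation*}
\delta'(\xi)-\delta'(0^+) = \sum_{t_i<\xi} J_i + \int_{(0,\xi)\setminus\{t_i\}} \delta''(\eta)\,\mathrm d\eta .
\end{equation*}
Substituting this and applying Fubini gives the weights: the jump terms contribute $\sum_i (1-t_i)J_i=\Delta_{\mathrm A}$, and the integral term contributes $\int (1-\eta)\,\delta''(\eta)\,\mathrm d\eta$ over $[0,1]\setminus\{t_i\}$.

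\textbf{Expanding the curvature.} The next step restricts this last integral to $\Omega$ and expands it by the chain rule. On the set where $\delta=0$ and $t$ is not an event, zeros are not isolated there (isolated zeros are events), so by analyticity $\delta\equiv0$ on that whole inter-event interval. There $\delta''=0$ and the set contributes nothing. On $\Omega$, $\zeta$ is analytic and $\delta=\varphi\circ\zeta$. We need $\varphi$ to be $C^2$ near $\zeta(\xi)$ whenever $F(\zeta(\xi))>0$. This holds because $F$ is polynomial, so it is analytic, and $F^{1/s}$ is analytic wherever $F>0$. Then
\begin{equation*}
\delta'' = \dot\zeta^\top\nabla^2\varphi\,\dot\zeta + \nabla\varphi^\top\ddot\zeta ,
\end{equation*}
which splits the remaining integral into $\Delta_{\mathrm N}+\Delta_{\mathrm R}$ and yields \eqref{eq:accounting}.

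\textbf{Finiteness of the integrals (main obstacle).} The hard step is showing that $\Delta_{\mathrm N}$ and $\Delta_{\mathrm R}$ are separately finite, since near events $\ddot\zeta$ and $\nabla^2\varphi$ may blow up. I would argue via semialgebraicity: $\zeta$ is semialgebraic and $F$ is polynomial, so each integrand is semialgebraic in $\xi$ on each interval of $\Omega$. By Puiseux expansion, it behaves like $c\,(\xi-t_i)^{\alpha}$ near each endpoint.

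For $\Delta_{\mathrm N}$, Lemma \ref{lem:phi_convex} makes the integrand nonnegative. Its integral over a subinterval is bounded by $\int|\delta''|$ plus the $\Delta_{\mathrm R}$-part, so I would instead bound it directly using Lipschitzness. Because $\dot\zeta$ is bounded (Lemma \ref{lem:finite}), $\delta'$ has finite one-sided limits at the events. Since $\delta''$ is semialgebraic, it has eventually constant sign near each endpoint, so $\int|\delta''|$ is finite there.

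The same argument applies to $\nabla\varphi^\top\ddot\zeta$, using $\tfrac{\mathrm d}{\mathrm d\xi}\nabla\varphi(\zeta)^\top\dot\zeta$ together with bounded $\dot\zeta$. With eventual sign constancy, each of the two pieces then has a finite improper integral, and the second follows once the first is finite.

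One residual case needs care: an endpoint where $\delta\to0$, where $\nabla^2\varphi$ can degenerate. I would handle it with the same Puiseux eventual-monotonicity argument, since $\varphi\circ\zeta$ and $\nabla\varphi(\zeta)^\top\dot\zeta=\delta'$ both stay bounded there.
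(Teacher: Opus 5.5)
Your derivation of the identity is correct and follows the paper's route in different form. Expanding $\delta'$ by its jumps $J_i$ and by $\delta''$ and then applying Fubini is equivalent to the paper's integration by parts on each stretch followed by telescoping. Your proof that $\delta''$ is integrable (bounded $\delta'$, eventually signed semialgebraic $\delta''$) is the paper's argument.

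The gap is the separate finiteness of $\Delta_{\mathrm{N}}$ and $\Delta_{\mathrm{R}}$, which is the one claim of the statement beyond \eqref{eq:accounting} itself.
Eventual sign constancy of each integrand near an endpoint only guarantees that each improper integral exists in $[-\infty,+\infty]$. It does not exclude $\Delta_{\mathrm{N}}=+\infty$ and $\Delta_{\mathrm{R}}=-\infty$, with the divergences cancelling in the integrable sum $\delta''$; integrands behaving like $\pm(\xi-t_i)^{-1}$ are semialgebraic and eventually signed. A Puiseux leading term $c(\xi-t_i)^\alpha$ helps only if you show $\alpha>-1$, which you do not.
You rightly set aside bounding the $\Delta_{\mathrm{N}}$ part by $\int\lvert\delta''\rvert$ plus the $\Delta_{\mathrm{R}}$ part as circular. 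But what you put in its place only re-proves that $\delta''$ is integrable.
``The same argument'' also does not transfer to $\nabla\varphi(\zeta)^\top\ddot\zeta$, which is not the derivative of a bounded function. The derivative of $\nabla\varphi(\zeta)^\top\dot\zeta=\delta'$ is the whole sum $\delta''$. Your residual case at an endpoint where $\delta\to0$ stays open for the same reason.

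The missing idea is to run your bounded-antiderivative argument on $p(\xi)\equiv\nabla\varphi(\zeta(\xi))$ rather than on $\delta'$.
\begin{itemize}
\item Since $\varphi$ is finite, convex and positively homogeneous of degree one, it is subadditive and hence Lipschitz with constant $\kappa_\varphi$ (the argument in the proof of Lemma \ref{lem:finite}).
\item So $\lVert p\rVert\le\kappa_\varphi$ on every stretch of $\Omega$. This holds uniformly up to endpoints where $\delta\to0$, even though $\nabla^2\varphi$ may blow up there, which disposes of your residual case.
\item Each component $p_k$ is $C^1$ and semialgebraic on the stretch. It is therefore monotone on finitely many, say $m$, subintervals, so $\int\lvert\dot p_k\rvert\le 2\kappa_\varphi m$.
\item The $\Delta_{\mathrm{N}}$ integrand is $\dot\zeta^\top\nabla^2\varphi(\zeta)\,\dot\zeta=\dot p^\top\dot\zeta$. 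It is bounded in absolute value by $\kappa_\zeta\sum_k\lvert\dot p_k\rvert$, because $\zeta$ is $\kappa_\zeta$-Lipschitz, and is hence integrable.
\item The $\Delta_{\mathrm{R}}$ integrand is then $\delta''-\dot p^\top\dot\zeta$, a difference of integrable functions.
\end{itemize}
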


Proposition \ref{prop:atomic} provides an expression for the atomic credit $J_i$ at each event.

\begin{proposition}[Atomic credit] \label{prop:atomic}
    Under Assumption \ref{assumption:branch}, $J_i \geq 0$ at every event. Where $v(t_i) > 0$,
    \begin{equation} \label{eq:atom_interior}
        J_i = \tfrac{1}{s}\, v(t_i)^{1/s - 1} \Big( \max_{\Lambda(t_i)} \partial_t L - \min_{\Lambda(t_i)} \partial_t L \Big) ,
    \end{equation}
    where $\partial_t L(\lambda, \mu, t) \equiv \partial_b L(\zeta(t), \lambda, \mu, \bar{b} + td)\, d$ and the extreme values are over $(\lambda, \mu) \in \Lambda(t_i)$; where $\delta(t_i) = 0$,
    \begin{equation} \label{eq:atom_boundary}
        J_i = \lvert \delta'(t_i^+) \rvert + \lvert \delta'(t_i^-) \rvert .
    \end{equation}
\end{proposition}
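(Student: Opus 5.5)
The plan is to compute the one-sided derivatives of $v(t)$ at an event with a Danskin-type argument that exploits the local minimality of $\zeta$, and then pass to $\delta = v^{1/s}$ by the chain rule. The two cases $v(t_i) > 0$ and $\delta(t_i) = 0$ are handled separately.

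\emph{Interior case, $v(t_i)>0$.} Set $\ell(\lambda,\mu,t) \equiv L(\zeta(t),\lambda,\mu,\bar b+td)$. For any $(\lambda,\mu)\in\Lambda(t)$, complementarity and feasibility give $v(t) = \ell(\lambda,\mu,t)$.
\begin{itemize}
\item \emph{Upper bound on the right derivative.} For $(\lambda,\mu)\in\Lambda(t_i)$ and small $h>0$, use $\mu\ge 0$ and $G(\zeta(t_i+h),\cdot)\le 0$ to obtain $v(t_i+h) \le L(\zeta(t_i+h),\lambda,\mu,t_i+h)$.
\item \emph{Taylor expansion.} Expand around $(\zeta(t_i),t_i)$. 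Stationarity $\nabla_z L = 0$ kills the first-order term in $z$. Since $\zeta$ is Lipschitz (Lemma~\ref{lem:finite}), the remainder is $o(h)$, although it may be $O(h^2)$ only up to Lipschitz constants. This yields $v'(t_i^+) \le \partial_t L(\lambda,\mu,t_i)$ for every multiplier, hence $v'(t_i^+) \le \min_{\Lambda(t_i)} \partial_t L$.
\item \emph{Left derivative.} The same argument with $h<0$ flips the inequality after dividing by $h$, giving $v'(t_i^-) \ge \max_{\Lambda(t_i)} \partial_t L$.
\end{itemize}

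These inequalities only show that the jump $v'(t_i^+)-v'(t_i^-)$ is at most $\min-\max\le 0$, which is the wrong sign. So the argument must be reversed: I should bound $v$ from \emph{below} by the Lagrangian. That is the main obstacle.

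\begin{itemize}
\item \emph{Reversing the bound.} Use the local minimality of $\zeta(t_i+h)$ at parameter $t_i+h$ together with MFCQ. Take a multiplier $(\lambda_h,\mu_h)\in\Lambda(t_i+h)$. MFCQ makes $\Lambda$ locally bounded and outer semicontinuous, so along a subsequence $(\lambda_h,\mu_h)\to(\lambda^*,\mu^*)\in\Lambda(t_i)$.
\item \emph{Expanding at the perturbed point.} Evaluate $\ell(\lambda_h,\mu_h,\cdot)$ at $t_i$ around $\zeta(t_i+h)$. This gives $v(t_i) \le \ell(\lambda_h,\mu_h,t_i) = v(t_i+h) - h\,\partial_t L(\lambda_h,\mu_h,t_i+h) + o(h)$, hence $v'(t_i^+) \ge \partial_t L(\lambda^*,\mu^*,t_i)$.
\item \emph{Equality of one-sided derivatives.} Combining with the upper bound, $v'(t_i^\pm)$ equal $\partial_t L$ evaluated at some limit multipliers. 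This reproduces the formula of Proposition~\ref{prop:sensitivity} at nondegenerate points.
\item \emph{Extreme multipliers and the sign of the jump.} It remains to show that the right and left derivatives select the \emph{extreme} elements of $\partial_t L$ over $\Lambda(t_i)$ in the orientation that makes $J_i\ge 0$. I expect to get this from a convexity-type structure of $v$ near events. Specifically, I would exploit the homogeneity of $F$ together with the fact that the slacks enter $H,G$ additively in the variants of Table~\ref{tab:instances}, i.e.\ the way slack relaxes the active limits, which the remark after Assumption~\ref{assumption:branch} also uses for MFCQ. The goal is to show that $v$ behaves like a max of smooth functions rather than a min of smooth functions, so that the right derivative is the max and the left derivative the min.
\item \emph{Chain rule.} Multiplying the jump by $\tfrac1s v(t_i)^{1/s-1}$ gives \eqref{eq:atom_interior}, with $J_i\ge 0$ because the max is at least the min.
\end{itemize}

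\emph{Boundary case, $\delta(t_i)=0$.} Here $\delta \ge 0$ everywhere and $\delta(t_i) = 0$, so $t_i$ is a minimum. Hence $\delta'(t_i^+) \ge 0 \ge \delta'(t_i^-)$, and both are finite by Lemma~\ref{lem:finite}. Then
\[
J_i = \delta'(t_i^+) - \delta'(t_i^-) = \lvert\delta'(t_i^+)\rvert + \lvert\delta'(t_i^-)\rvert \ge 0,
\]
which is \eqref{eq:atom_boundary}. The chain rule is not needed here, which matters because $v^{1/s}$ is not differentiable at a zero when $s>1$.

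The crux is the interior sign claim: identifying which extreme multiplier the one-sided derivatives select. I would try first to establish it through the structural lower-bound argument above, and if that fails, fall back on an active-set decomposition of $v$ near $t_i$ into finitely many analytic pieces, using the semialgebraicity in Lemma~\ref{lem:finite}.
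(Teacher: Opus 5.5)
Your architecture (a Lagrangian bound with fixed multipliers in $\Lambda(t_i)$, a second bound with multipliers in $\Lambda(t_i+h)$ kept bounded by MFCQ, then the chain rule) is the paper's route. As written, however, the interior case has a genuine gap, and it comes from a sign error. Take any point $z$ feasible for a parameter $b'$ and any $(\lambda,\mu)$ with $\mu\ge 0$. Then $H(z,b')=0$ and $\mu^\top G(z,b')\le 0$, so $L(z,\lambda,\mu,b')\le F(z)$. Hence $v(t_i+h)\ge L(\zeta(t_i+h),\lambda,\mu,\bar b+(t_i+h)d)$, not $\le$. Likewise, in your perturbed step $v(t_i)\ge L(\zeta(t_i),\lambda_h,\mu_h,\bar b+t_i d)$, not $\le$. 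With both inequalities reversed as you wrote them, your two bounds pin $v'(t_i^+)$ to $\min_{\Lambda(t_i)}\partial_t L$ and $v'(t_i^-)$ to $\max_{\Lambda(t_i)}\partial_t L$, which gives $J_i\le 0$. So the ``orientation'' you defer to a convexity-type or active-set argument is contradicted by your own inequalities. No appeal to the homogeneity of $F$ or to the additive slacks of Table~\ref{tab:instances} can rescue that, and none is needed.

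With the signs corrected, your own steps close immediately, exactly as in the paper's Lemma~\ref{lem:directional}.
\begin{itemize}
\item Fixed multipliers, stationarity and the Lipschitz bound on $\zeta$ give $v(t_i+h)-v(t_i)\ge h\,\partial_t L(\lambda,\mu,t_i)-\kappa h^2$ for every $(\lambda,\mu)\in\Lambda(t_i)$. Hence $v'(t_i^+)\ge\max_{\Lambda(t_i)}\partial_t L$, and, dividing by $h<0$, $v'(t_i^-)\le\min_{\Lambda(t_i)}\partial_t L$.
\item Expand with the perturbed multipliers about $(\zeta(t_i+h),\bar b+(t_i+h)d)$; the remainder is uniform because these multipliers are bounded. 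This gives $v(t_i+h)-v(t_i)\le h\,\partial_t L(\lambda_h,\mu_h,t_i+h)+\kappa' h^2$. Since the limit points of $(\lambda_h,\mu_h)$ lie in $\Lambda(t_i)$, this yields $v'(t_i^+)\le\max_{\Lambda(t_i)}\partial_t L$ and $v'(t_i^-)\ge\min_{\Lambda(t_i)}\partial_t L$.
\end{itemize}
Thus $v'(t_i^+)$ is the maximum and $v'(t_i^-)$ the minimum. The chain rule at $v(t_i)>0$ then gives \eqref{eq:atom_interior}, with $J_i\ge 0$ as a maximum minus a minimum over the same set, for any program satisfying Assumption~\ref{assumption:branch}. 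Your boundary case is correct and is the paper's argument. Add only that $v(t_i)\ge v^*(t_i)\ge 0$, which makes the two cases exhaustive.
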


\begin{table}[!t]
\caption{Integrands of \eqref{eq:accounting} and boundary atoms by variant}
\label{tab:credit}
\centering
\renewcommand{\arraystretch}{1.3}
\begin{tabular}{lccc}
\hline
 & $\nabla\varphi\left(\zeta\right)^\top \ddot{\zeta}$ & $\dot{\zeta}^\top \nabla^2\varphi\left(\zeta\right) \dot{\zeta}$ & $J_i$, $\delta(t_i) = 0$ \\
\hline
E1 & $\mathbf{1}^\top \ddot{\sigma}$ & $0$ & $\mathbf{1}^\top \lvert \dot{\sigma}(t_i^-) \rvert$ \\
C1 & $\mathbf{1}^\top \ddot{r}$ & $0$ & $\mathbf{1}^\top \lvert \dot{r}(t_i^-) \rvert$ \\
E$\infty$ & $\ddot{\varsigma}$ & $0$ & $\lvert \dot{\varsigma}(t_i^-) \rvert$ \\
E2 & $\sigma^\top \ddot{\sigma} / \lVert \sigma \rVert$ & $\big\lVert \operatorname{proj}_{\sigma^\perp} \dot{\sigma} \big\rVert^2 / \lVert \sigma \rVert$ & $\lVert \dot{\sigma}(t_i^-) \rVert_2$ \\
C2 & $\tilde{r}^\top \ddot{\tilde{r}} / \lVert \tilde{r} \rVert$ & $\big\lVert \operatorname{proj}_{\tilde{r}^\perp} \dot{\tilde{r}} \big\rVert^2 / \lVert \tilde{r} \rVert$ & $\lVert \dot{\tilde{r}}(t_i^-) \rVert_2$ \\
\hline
\end{tabular}
\end{table}

Propositions \ref{prop:accounting} and \ref{prop:atomic} are best interpreted in the context of the instances in Table \ref{tab:instances}. Table \ref{tab:credit} gives, for each, the integrands of $\Delta_{\mathrm{R}}$ and $\Delta_{\mathrm{N}}$ and $J_i$ in the case \eqref{eq:atom_boundary}. The \textit{response} term, in the first column, is the acceleration of the slack trajectory along the segment: total under the 1-norms, of the maximal slack under the $\infty$-norm, and radial under the 2-norms. It is unsigned in general. If the branch responds approximately affinely to the portfolio along the segment, so that $\ddot{\zeta} \approx 0$, the response term is small.

Where the response term comes from the curvature of the trajectory $\zeta$, the non-atomic term in the second column comes from the curvature of $\varphi$. The non-atomic term is nonnegative since $\varphi$ is convex (Lemma \ref{lem:phi_convex}), and zero under the 1- and $\infty$-norms, where $\varphi$ is linear. For the 2-norm variants it is the squared transverse velocity of the slacks divided by their distance from the origin: a slack trajectory earns \textit{non-atomic credit} whenever it does not head directly toward or away from the origin, which is to be expected at each active-set change, where the slacks turn to serve a different set of limits.

Proposition \ref{prop:atomic} gives $J_i$ for two kinds of event: \eqref{eq:atom_interior} at \textit{interior events}, where the deficiency is positive, and \eqref{eq:atom_boundary} at \textit{boundary events}, where it is zero. $J_i$ is nonnegative by inspection in both cases, as Proposition \ref{prop:atomic} confirms, meaning that all slope jumps fold inward, in the convex direction. The third column of Table \ref{tab:credit} gives the boundary case. For all variants it is the speed at which the slack trajectory arrives at zero. Intuitively, for a segment on which the deficiency reaches and remains at zero, this quantity is simply the kink formed by the deficiency function transversally meeting zero. Therefore, any time $\delta(b) = 0$, the segment must reach zero, and it accrues \textit{atomic credit} whenever it meets zero transversally before $t = 1$. By \eqref{eq:atom_interior}, the atom at an interior event is positive only where the multiplier set $\Lambda(t_i)$ is not a single point, in particular at active-set changes where a group of multipliers is constrained only in sum. These include a component of $\sigma$ or $r$ reaching zero under E1 or C1 and a change of the binding epigraph row of $\varsigma$ under E$\infty$.

Taken together, \eqref{eq:accounting} weighs the unsigned response term $\Delta_{\mathrm{R}}$ against the two credit terms $\Delta_{\mathrm{A}}$ and $\Delta_{\mathrm{N}}$, both nonnegative by structure. Theorem \ref{thm:validity} formalizes this balance as a bound on the approximation error.

\begin{theorem}[Approximation error] \label{thm:validity}
    Under Assumption \ref{assumption:branch}, $\Delta_{\mathrm{A}} \geq 0$ and $\Delta_{\mathrm{N}} \geq 0$, so that
    \begin{equation} \label{eq:overestimation}
        \Delta \leq -\Delta_{\mathrm{R}} .
    \end{equation}
\end{theorem}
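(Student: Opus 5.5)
The theorem follows once the two credit terms are shown to be nonnegative. Given that, \eqref{eq:overestimation} comes straight from the identity \eqref{eq:accounting} of Proposition \ref{prop:accounting}: $\Delta = -\Delta_{\mathrm{A}} - \Delta_{\mathrm{N}} - \Delta_{\mathrm{R}} \leq -\Delta_{\mathrm{R}}$. That proposition also guarantees that all three terms are finite, so rearranging is legitimate. The plan is therefore to prove $\Delta_{\mathrm{A}} \geq 0$ and $\Delta_{\mathrm{N}} \geq 0$ separately, using only earlier results.

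\emph{Atomic term.} By Lemma \ref{lem:finite} there are finitely many events $t_1 < \dots < t_K$, all in $(0,1)$. Hence each weight $1 - t_i$ is strictly positive. Proposition \ref{prop:atomic} gives $J_i \geq 0$ at every event, at interior events via \eqref{eq:atom_interior} and at boundary events via \eqref{eq:atom_boundary}. So $\Delta_{\mathrm{A}} = \sum_i (1-t_i) J_i$ is a finite sum of nonnegative terms and is itself nonnegative. No further work is needed beyond quoting these results.

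\emph{Non-atomic term.} On $\Omega$, $t$ is not an event, so $\zeta$ is analytic there and $\dot\zeta$ exists. Also $\delta(t) = \varphi(\zeta(t)) > 0$ on $\Omega$, so $F(\zeta(t)) > 0$. We need $\nabla^2\varphi(\zeta(\xi))$ to exist and be positive semidefinite at such points. The plan is to argue that $F$ is a polynomial (Definition \ref{def:deficiency}\ref{def:deficiency:polynomial}) that is positive at $\zeta(\xi)$. Then $\varphi = F^{1/s}$ is $C^2$, indeed analytic, in a neighborhood of $\zeta(\xi)$, because $u \mapsto u^{1/s}$ is smooth on $(0,\infty)$.

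By Lemma \ref{lem:phi_convex}, $\varphi$ is convex. A convex function that is $C^2$ on an open set has a positive semidefinite Hessian there. This follows from the second-order characterization, obtained by applying convexity along lines through the point and using the one-dimensional fact that a convex $C^2$ function has nonnegative second derivative. Consequently $\dot\zeta(\xi)^\top \nabla^2\varphi(\zeta(\xi))\,\dot\zeta(\xi) \geq 0$ for every $\xi \in \Omega$. The weight $1-\xi$ is nonnegative on $[0,1]$, so the integrand of $\Delta_{\mathrm{N}}$ is pointwise nonnegative. The integral is finite by Proposition \ref{prop:accounting}, and therefore $\Delta_{\mathrm{N}} \geq 0$.

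\emph{Main obstacle.} The only delicate step is justifying the existence of $\nabla^2\varphi$ at the relevant points. One subtlety is that $\varphi$ may be convex only on a relevant domain rather than on all of $z$-space. For example, $F = \mathbf{1}^\top \sigma$ is linear, and $\varphi$ is used where $\sigma \geq 0$; if convexity holds only on such a domain, the PSD argument should be applied in directions where $\varphi$ is defined. In that case I would use the fact that $\zeta(\xi)$ lies in the feasible region, and that for the variants in Table \ref{tab:instances} $\varphi$ is a norm or a linear function of the slacks. Both are convex and $C^2$ away from zero, which $\delta(\xi) > 0$ excludes.

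If the Hessian nonetheless fails to exist somewhere, a fallback is available. On each subinterval of $\Omega$, the integrand equals $\tfrac{d^2}{d\xi^2}\varphi(\zeta(\xi)) - \nabla\varphi^\top\ddot\zeta$. Alternatively, one can directly use that $\eta \mapsto \varphi(\zeta(\xi) + \eta\,\dot\zeta(\xi))$ is convex and analytic near $\eta = 0$, so its second derivative at $0$ is nonnegative; that second derivative is exactly the quadratic form in the integrand.
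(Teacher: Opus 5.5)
Your proposal is correct and follows the paper's proof essentially step for step: each $J_i \geq 0$ by Proposition \ref{prop:atomic} with positive weights $1 - t_i$, then a positive semidefinite Hessian of the convex $\varphi$, which is smooth near the branch on $\Omega$, and finally finiteness from Proposition \ref{prop:accounting} to rearrange \eqref{eq:accounting}. The domain worry in your last paragraph does not arise, because Lemma \ref{lem:phi_convex} gives convexity of $\varphi$ on all of $z$-space.
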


Therefore, the approximation tends to underestimate the deficiency, especially where the response is approximately affine ($\Delta_{\mathrm{R}} \approx 0$, as linearized power flow models assume) and the credit $\Delta_{\mathrm{A}} + \Delta_{\mathrm{N}}$ accumulated along the segment is appreciable. This tendency is strongest when $\delta(b) = 0$: the trajectory must then reach zero, which earns the boundary atom in Table \ref{tab:credit} whenever it arrives transversally, and typically crosses several active sets on the way, which earns both kinds of credit as argued above. Notice that every term of \eqref{eq:accounting} carries the weight $(1 - t_i)$, or $(1 - \xi)$ inside the integrals, so both credit and response have larger effects the earlier they occur along the segment. In particular, a deficiency that reaches zero early earns more credit.

Section \ref{sec:cutting} exploits Theorem \ref{thm:validity} to design a cutting-plane search with cuts given by the first-order approximations $\hat{\delta}_{\mathcal{M}, \bar{b}}$ for a range of monitored sets $\mathcal{M}$.

\section{Deficiency-Guided Search} \label{sec:cutting}

Given the idealized local solver of Assumption \ref{assumption:branch}, the deficiency yields a natural reformulation of \eqref{eq:lodge},
\begin{equation} \label{eq:branch_lodge}
    \min_{b \in \mathcal{B}} \; C(b) \quad \text{s.t.} \quad \delta_{\overline{\mathcal{M}}}(b) \leq 0 .
\end{equation}
Since the deficiency is nonnegative, the inequality constraint is equivalent to equality. The feasible set of \eqref{eq:branch_lodge} is therefore the set of portfolios $b \in \mathcal{B}$ with $\delta_{\overline{\mathcal{M}}}(b) = 0$, a subset of the feasible set of \eqref{eq:lodge}. The two problems have the same optimal value if and only if $\zeta_{\overline{\mathcal{M}}}$ certifies some optimal portfolio of \eqref{eq:lodge}; otherwise \eqref{eq:branch_lodge} is a restriction.

Adequacy for $\overline{\mathcal{M}}$ implies adequacy for every $\mathcal{M} \subseteq \overline{\mathcal{M}}$. Provided that each $\zeta_{\mathcal{M}}$ certifies every portfolio $\zeta_{\overline{\mathcal{M}}}$ certifies, \eqref{eq:branch_lodge} has the same feasible set as
\begin{equation} \label{eq:branch_lodge_all}
    \min_{b \in \mathcal{B}} \; C(b) \quad \text{s.t.} \quad \delta_{\mathcal{M}}(b) \leq 0 \quad \forall \, \mathcal{M} \subseteq \overline{\mathcal{M}} .
\end{equation}

\eqref{eq:branch_lodge_all} cannot be solved directly, because the deficiency functions $\delta_{\mathcal{M}}$ have no closed form. However, under Assumption \ref{assumption:branch} and Proposition \ref{prop:sensitivity}, we can access their zero- and first-order information at a given portfolio. Suppose $\zeta_{\mathcal{M}}$ has been evaluated at $\bar{b}$ for every pair $(\bar{b}, \mathcal{M})$ in a finite set $\mathcal{P}$, each with $\delta_{\mathcal{M}}(\bar{b}) > 0$. Imposing the first-order approximations of Section \ref{sec:branches} in place of the constraints of \eqref{eq:branch_lodge_all} gives the \textit{master problem}
\begin{subequations} \label{eq:inv-subproblem}
    \begin{align}
        \min_{b} \quad & C(b) \\
        \label{eq:cut} \text{s.t.} \quad & \hat{\delta}_{\mathcal{M}, \bar{b}}\left(b\right) \leq 0 \quad \forall \, \left(\bar{b}, \mathcal{M}\right) \in \mathcal{P} \\
        & b \in \mathcal{B}
    \end{align}
\end{subequations}
\eqref{eq:inv-subproblem} is a mixed-integer linear program, which off-the-shelf solvers solve to global optimality. The master problem is motivated by Theorem \ref{thm:validity}, which establishes that the cut \eqref{eq:cut} of a pair $(\bar{b}, \mathcal{M})$ excludes a $b$ feasible in \eqref{eq:branch_lodge} only if the response along the segment from $\bar{b}$ to $b$ is concave enough to overwhelm the buffer provided by the nonnegative credit terms. Moreover, as discussed in Section \ref{sec:validity}, these credit terms are most assured when $\delta_{\mathcal{M}}(b) = 0$, the only case of interest here: overestimation at a $b$ with $\delta_{\mathcal{M}}(b) > 0$ correctly excludes $b$ and is therefore inconsequential. Admitting such a $b$ is similarly benign, since it is cut in turn if proposed.

\eqref{eq:inv-subproblem} suggests a simple heuristic cutting-plane search that alternately solves the master problem for a proposal and evaluates the deficiency $\delta_{\overline{\mathcal{M}}}$ for the full monitored set at that proposal, adding the resulting \textit{aggregate cut} \eqref{eq:cut}. In this straightforward form, each cut reveals only a scalar deficiency, and the sensitivity of each candidate is the sum of the relief it provides over \textit{all} violated limits. In particular, the master problem cannot connect specific candidates to specific violations and may attempt to address a violation by building candidates that relieve other limits. We observe that this cross-substitution can cause iteration counts to grow combinatorially, and that it is effectively addressed by \textit{targeted cuts} for restricted monitored sets $\mathcal{M} \subsetneq \overline{\mathcal{M}}$ that isolate a few of the violated limits. A targeted cut requires, to first order, that every later proposal relieve its targeted limits. See Section \ref{sec:results} for further discussion and motivation.

As a means of identifying promising monitored subsets, we introduce the metric of a \textit{share}, which characterizes the salience of the components in $\mathcal{M}$ with respect to the deficiency solution for $\overline{\mathcal{M}}$. For the variants in Table \ref{tab:instances}, whose programs contain the operating limits \eqref{eq:inequalities}, let $\delta(b; \theta)$ denote the deficiency function for $\overline{\mathcal{M}}$ with $\sigma$ replaced by $\sigma + \theta$ in \eqref{eq:inequalities}, so that $\delta(b; \mathbf{0}) = \delta_{\overline{\mathcal{M}}}(b)$. Define
\begin{equation} \label{eq:share}
    \operatorname{share}_{e}(b) \equiv -\sum_{k \in \mathcal{K}_e} u_k \frac{\partial \delta}{\partial \theta_k}\left(b; \mathbf{0}\right)
\end{equation}
for an increment $u \geq \mathbf{0}$ indexed like $\sigma$, where $\mathcal{K}_e$ collects the entries of $\sigma$ belonging to the limits of component $e \in \overline{\mathcal{M}}$, and let $\operatorname{share}_{\mathcal{M}}(b) \equiv \sum_{e \in \mathcal{M}} \operatorname{share}_{e}(b)$. Intuitively, the share is the sensitivity of the deficiency to an incremental loosening of the limits of the components of $\mathcal{M}$ along the direction $u$. For the exceedance variants E1, E2 and E$\infty$ we use $u = \sigma^*$, the exceedances at $\zeta_{\overline{\mathcal{M}}}(b)$, to capture how much loosening each limit needs before its exceedance vanishes; this information is not available for the compensation variants C1 and C2, so we use $u = \mathbf{1}$. By the argument of Proposition \ref{prop:sensitivity}, with $\theta$ in place of $b$, the share is read from $\zeta_{\overline{\mathcal{M}}}(b)$ and its multipliers without a further solve; Section \ref{sec:implementation} gives the derivative.

Algorithm \ref{alg:cp-lodge} formalizes the cutting-plane search built around \eqref{eq:inv-subproblem}, augmented with targeted cuts identified by share.

\begin{algorithm}
\caption{CP-DSEP} \label{alg:cp-lodge}
    $\mathcal{P} \gets \emptyset$ \;
    \Repeat{$\delta_{\overline{\mathcal{M}}}\left(b^*\right) \leq 0$}{
    $b^* \gets$ solution of \eqref{eq:inv-subproblem} with cuts $\mathcal{P}$ \label{step:ip} \;
    $z^* \gets \zeta_{\overline{\mathcal{M}}}\left(b^*\right)$ \label{step:opf} \;
    \If{$\delta_{\overline{\mathcal{M}}}\left(b^*\right) > 0$ \label{step:check}}{
        $\mathcal{P} \gets \mathcal{P} \cup \left\{\left(b^*, \overline{\mathcal{M}}\right)\right\}$ \label{step:append} \;
        $e_1, \dots, e_m \gets$ the components $e \in \overline{\mathcal{M}}$ with $\operatorname{share}_e(b^*) > 0$, ordered so that $\operatorname{share}_{e_1}(b^*) \geq \dots \geq \operatorname{share}_{e_m}(b^*)$ \label{step:rank} \;
        \For{$j = 1, \dots, m$}{
            $\mathcal{M} \gets \left\{e_1, \dots, e_j\right\}$ \;
            \If{$\delta_{\mathcal{M}}\left(b^*\right) > 0$ \label{step:restricted}}{
                $\mathcal{P} \gets \mathcal{P} \cup \left\{\left(b^*, \mathcal{M}\right)\right\}$; \textbf{break} \label{step:targeted} \;
            }
        }
    }
    }
    \Return $\left(b^*, z^*\right)$
\end{algorithm}

For each proposed portfolio, Algorithm \ref{alg:cp-lodge} adds an aggregate cut in Step \ref{step:append} and then attempts to add a targeted cut by adding components to the monitored set in order of decreasing share, thereby isolating a small number of components for which the proposed portfolio is deficient even when all other components are relaxed. If no such monitored set is found, no targeted cut is added.

Since each cut excludes its own portfolio, Algorithm \ref{alg:cp-lodge} terminates with a certified portfolio, which is feasible in \eqref{eq:lodge}, or with an infeasible master problem in Step \ref{step:ip}. The latter outcome materializes only if $\delta_{\overline{\mathcal{M}}}(b) > 0$ for every $b \in \mathcal{B}$ or if, in the language of Section \ref{sec:validity}, the response is concave enough to outweigh the credit so pervasively across the deficiency landscape that the cuts exclude \textit{every} portfolio certified by $\zeta_{\overline{\mathcal{M}}}$.

\section{Numerical Experiments} \label{sec:experiments}

\subsection{Case Study} \label{sec:case_study}

We evaluate the search on six 12.47 kV distribution feeders from the SMART-DS synthetic dataset \cite{smartds2021}, three from its Austin region (\texttt{p1rhs7}, \texttt{p1uhs2} and \texttt{p3uhs2}) and three from its San Francisco region (\texttt{p5rhs0}, \texttt{p11uhs0} and \texttt{p15uhs4}). Each feeder is run in two of the dataset's published 2018 scenarios: the base scenario, which contains no distributed generation, and the highest customer photovoltaic adoption scenario (\texttt{solar\_extreme}), which places PV systems, sized by building footprint, at 85\% of customer locations. All loads are set to their maximum demand over the annual profiles provided in the data, and each PV system injects its rated DC capacity at unity power factor as a noncurtailable negative load. Since the networks are sized for coincident peaks \cite{smartds2021}, setting every individual peak and rating at once produces moderate violations that call for a portfolio of targeted upgrades.

Line charging and transformer magnetizing branches are neglected. Single-phase center-tapped service transformers are folded into loss-preserving equivalent loads on the primary side, removing the low-voltage secondaries at the leaves of the network. Delta-connected service transformers, which appear only on \texttt{p5rhs0}, are converted into exact wye-wye equivalents.

\cite{kersting2017} provides the impedances and ampacities of seven standard ACSR sizes, and the distribution system upgrade unit cost database \cite{nrel2017costdb, horowitz2018cost} provides PG\&E's per-unit-length cost of each, interpreted as three-phase construction and scaled by conductor count on lines with fewer phases; an eighth rung represents two ACSR 477 circuits in parallel, at twice the cost and ampacity and half the impedance, with mutual coupling neglected. The database also provides six standard transformer sizes from 150 to 2500 kVA with SDG\&E's unit costs; each transformer candidate keeps the existing transformer's percent impedance on its own rating. The menu $\mathcal{G}_{ij}$ of every $i \to j \in \mathcal{L} \cup \mathcal{T}$ comprises the existing device and every standard candidate of greater capacity. All costs are the database's 2016 figures, unescalated.

The network graph, less the substation, is partitioned into five zones by spectral clustering on its unweighted adjacency matrix, and each zone receives at most one candidate regulator, on the line supplying the zone's bus nearest the substation. Candidate regulators are rated 10 MVA, matching the dataset's existing units, and cost \$166\,000, the mean total installed cost of the database's 12--12.7\,kV line regulators. All regulators, existing and candidate, are assigned a 1\% resistive impedance and a $\pm 10\%$ tap range. A 300\,kVAr three-phase capacitor bank candidate, costing \$10\,700 \cite{horowitz2018cost}, is placed at every three-phase bus at primary voltage other than the substation bus and buses already carrying a bank. Installed banks, existing or new, must lie more than four branches apart, a linear spacing constraint implemented as an instance of the generic constraint \eqref{eq:side}. As implemented, the instances of \eqref{eq:lodge} range from 151 to 439 buses, 757 to 2\,352 binary variables and 1\,777 to 6\,015 real continuous variables, excluding the slacks.

\subsection{Linear Benchmark} \label{sec:linear}

We compare the portfolios of Algorithm \ref{alg:cp-lodge} against a mixed-integer second-order cone program generated by applying the Lin3DistFlow approximation of \cite{gan2014convex} to the radial networks of the study. The complex power $S_{ij,\phi}$ on each branch $i \to j \in \mathcal{E}$ is split over its menu as $S_{ij,\phi} = \sum_{k \in \mathcal{G}_{ij}} S_{ij,\phi}^k$, with each candidate's share limited by $\lvert S_{ij,\phi}^k \rvert \leq b_k \overline{S}_k$, so that by \eqref{eq:single_upgrade} only the installed candidate carries flow. For a transformer candidate $k \in \mathcal{G}_{ij}$, $i \to j \in \mathcal{T}$, $\overline{S}_k$ is used directly; for a line candidate, $\overline{S}_k \equiv \lvert V_{i,\phi}^\dagger \rvert \overline{I}_k$; and for a regulator candidate, $\overline{S}_k$ is chosen large enough that it does not bind. Similarly, a capacitor bank $m \in \mathcal{C}_i$ enters the power balance, as in \cite{gan2014convex}, as a reactive power injection, approximated at nominal voltage as $B_m \lvert V_{i,\phi}^\dagger \rvert^2$. The voltage drops \eqref{eq:line_ohms}, \eqref{eq:xfmr_ohms} and \eqref{eq:reg_ohms} become
\begin{subequations} \label{eq:linear_power_flow}
\begin{constraintblock}
    \begin{flalign}
        \label{eq:lin_line_ohms} & v_i = v_j + \sum_{k \in \mathcal{G}_{ij}} 2\operatorname{Re}(\tilde{Z}_k S_{ij}^k) && \forall i \to j \in \mathcal{L}
    \end{flalign}
    \begin{flalign}
        \nonumber & v_{i,\phi} = n_{ij}^2 v_{j,\phi} + \sum_{k \in \mathcal{G}_{ij}} 2\operatorname{Re}(Z_k S_{ij,\phi}^k) && \\
        \label{eq:lin_xfmr_ohms} & \qquad \forall i \to j \in \mathcal{T}, \phi \in \Phi_j
    \end{flalign}
    \begin{flalign}
        \nonumber & v_{i,\phi} = u_{ij,\phi} + \sum_{k \in \mathcal{G}_{ij}} 2\operatorname{Re}(Z_k S_{ij,\phi}^k) && \\
        \label{eq:lin_reg_ohms} & \qquad \forall i \to j \in \mathcal{R}, \phi \in \Phi_j
    \end{flalign}
\end{constraintblock}
\end{subequations}
where $v_{i,\phi}$ is the squared voltage magnitude, $Z_k \equiv Y_k^{-1}$, $\tilde{Z}_k \equiv \text{Diag}(\alpha)(Z_k^{H})^{\top}\text{Diag}(\alpha)^H$ is the balanced-voltage counterpart of a line candidate's impedance with $\alpha \equiv (1, e^{-\mathbf{j}2\pi/3}, e^{\mathbf{j}2\pi/3})$, and $u_{ij,\phi} \equiv t_{ij,\phi}^{-2} v_{j,\phi}$ is a regulator's post-tap squared voltage, whose tap range \eqref{eq:tap_range} becomes
\begin{flalign}
    \nonumber & b_k \implies \left(1 + \tau_k\right)^{-2}v_{j,\phi} \leq u_{ij,\phi} \leq \left(1 - \tau_k\right)^{-2}v_{j,\phi} && \\
    \label{eq:squared_tap_indicator} & \qquad \forall i \to j \in \mathcal{R}, \phi \in \Phi_j, k \in \mathcal{G}_{ij} .
\end{flalign}
The benchmark minimizes $C(b)$ over $b \in \mathcal{B}$ subject to these constraints, the power balance and slack-bus conditions of Lin3DistFlow, the voltage limits \eqref{eq:volt_lims} on the squared voltages, and the capacitor bound \eqref{eq:capacitor_bound}.

\subsection{Implementation} \label{sec:implementation}

In place of the idealized local solver $\zeta_{\mathcal{M}}$, we solve \eqref{eq:deficiency_program} with Ipopt \cite{wachter2006} (tolerance $10^{-8}$, linear solver MA57 \cite{hsl}, gradient-based scaling, no acceptable-convergence termination). Every solve for $\overline{\mathcal{M}}$ starts from the same cold point (flat nominal voltages, regulators at neutral, capacitors at zero susceptance, slacks inside their limits), so the solver is a function of the portfolio. The solves for $\mathcal{M} \subsetneq \overline{\mathcal{M}}$ are warm-started at $\zeta_{\overline{\mathcal{M}}}(b^*)$, which is always available in Algorithm \ref{alg:cp-lodge} and already has objective zero wherever $\zeta_{\overline{\mathcal{M}}}$ certifies $b^*$, as Section \ref{sec:cutting} requires. The do-not-invest proxies are implemented at $\varepsilon = 0$, with each pinned device's coinciding bounds replaced by an equality whose multiplier is projected on the direction in which the bound can relax, the limit of the multiplier as $\varepsilon \to 0$. The master problem \eqref{eq:inv-subproblem} and the benchmark are solved with Gurobi \cite{gurobi}, the master to a relative optimality gap of 0.3\%, which bounds each returned cost within 0.3\% of the optimum of its final master. All arms ran sequentially on a laptop with an Intel Core i7-12800H processor (14 cores, 20 threads) and 16\,GB of memory, with models built in Pyomo \cite{bynum2021pyomo} and solved by Ipopt 3.14.20 and Gurobi 12.0.3.

The derivative in \eqref{eq:share} is $\partial \delta / \partial \theta_k (b; \mathbf{0}) = \tfrac{1}{s}\, v_{\overline{\mathcal{M}}}(b)^{1/s - 1}\, \mu_g^\top\, \partial g_{\overline{\mathcal{M}}} / \partial \sigma_k$, evaluated at $(\zeta_{\overline{\mathcal{M}}}(b), b)$, where $\mu_g$ collects the multipliers of the rows of $g_{\overline{\mathcal{M}}}$, each of which is a row of $G_{\overline{\mathcal{M}}}$ for the variants in Table \ref{tab:instances}.

\subsection{Results} \label{sec:results}

We run Algorithm \ref{alg:cp-lodge} under all five deficiency variants in Table \ref{tab:instances} on each of the twelve feeder--scenario pairs of Section \ref{sec:case_study}, for a total of 60 arms. The master problem remained feasible throughout, so Algorithm \ref{alg:cp-lodge} returned a certified portfolio on every arm. Table \ref{tab:performance} reports the iterations and wall clock of each arm, with the fastest variant on each case in bold. The five variants return portfolios of equal cost on eleven of the twelve cases, differing only in capacitor bank siting; on \texttt{base/p11uhs0} the C2 portfolio costs one capacitor bank, or 0.16\%, more. Figure \ref{fig:cost} breaks down the cost by device type of the portfolio shared by all variants (all but C2 on \texttt{base/p11uhs0}), beside that of the linear benchmark.

\begin{table}[!t]
\caption{Iterations and wall clock (seconds) by variant}
\label{tab:performance}
\centering
\footnotesize
\setlength{\tabcolsep}{4pt}
\begin{tabular}{lr|rrrrr}
\hline
feeder & Lin3DF & C1 & C2 & E1 & E2 & E$\infty$ \\
\hline
\multicolumn{7}{l}{\textit{base}} \\
p11uhs0 & 6.1 & 45/271.5 & 44/210.0 & 67/372.0 & 70/429.0 & \textbf{43/163.2} \\
p15uhs4 & 6.3 & 12/72.1 & 11/62.4 & 16/90.5 & 12/74.8 & \textbf{11/49.4} \\
p1rhs7 & 2.9 & 2/3.5 & 2/3.4 & 2/3.5 & 2/3.5 & \textbf{2/3.0} \\
p1uhs2 & 6.2 & 16/111.3 & 15/87.0 & 16/94.8 & 15/90.3 & \textbf{14/65.5} \\
p3uhs2 & 2.1 & 3/4.9 & 2/3.0 & 2/3.0 & 2/3.0 & \textbf{2/2.6} \\
p5rhs0 & 2.7 & 1/1.6 & \textbf{1/1.4} & 1/1.5 & \textbf{1/1.4} & \textbf{1/1.4} \\
\multicolumn{7}{l}{\textit{pv}} \\
p11uhs0 & 5.3 & 3/11.1 & 4/14.1 & \textbf{3/9.5} & 3/10.5 & 5/14.1 \\
p15uhs4 & 6.8 & 19/113.6 & 16/88.4 & 16/88.7 & 16/89.8 & \textbf{16/68.3} \\
p1rhs7 & 2.7 & 2/3.6 & 2/3.6 & 2/3.6 & 2/3.5 & \textbf{2/3.0} \\
p1uhs2 & 6.2 & 4/20.6 & \textbf{4/20.1} & 5/25.5 & 5/25.9 & 5/21.0 \\
p3uhs2 & 2.2 & 1/1.3 & 1/1.3 & 1/1.3 & \textbf{1/1.2} & \textbf{1/1.2} \\
p5rhs0 & 2.9 & \textbf{17/49.6} & 19/52.2 & 25/69.5 & 26/73.2 & 26/55.3 \\
\hline
\textit{total} & 52.4 & 664.7 & 546.9 & 763.4 & 806.1 & 448.0 \\
\hline
\end{tabular}

\end{table}

\begin{figure}[!t]
\centering
\includegraphics[width=\columnwidth]{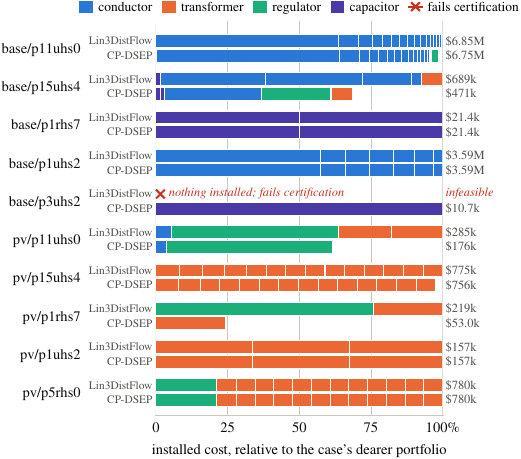}
\caption{Installed cost per case; the two cases that need no upgrades are omitted.}
\label{fig:cost}
\end{figure}

Ipopt reported a locally optimal solution on every call in the study, with no restoration failure, iteration-limit termination or detected infeasibility. The search also depends on the solver certifying adequate portfolios, which we cannot confirm in general. The variants provide some empirical validation: their programs and branches differ, yet they certify portfolios of equal cost on eleven of the twelve cases, so a false negative would have to be shared by all five branches to go unseen. Moreover, on no case does the benchmark return a cheaper feasible portfolio than the certified ones.

Section \ref{sec:validity} predicts that the first-order approximation around a given portfolio rarely overestimates the deficiency. We test this prediction by evaluating, within each run, every cut at the earlier portfolios where cuts of the same value function were generated. In all 9\,937 such checks the cut lies below the deficiency, strong empirical support for this prediction. On \texttt{base/p11uhs0}, an aggregate C2 cut and the targeted cut generated beside it, at one portfolio, exclude all four one-bank-cheaper portfolios certified by the other variants, by up to 6.7\% and 5.6\% of their deficiencies, respectively. These are the only overestimates found.

We test each of the benchmark's portfolios with the E$\infty$ deficiency, which every portfolio of the search also passes. Ten of the twelve cases require upgrades. The benchmark finds the global optimum of an approximate model, so Algorithm \ref{alg:cp-lodge} can outperform it only if two conditions hold. First, linearization must lose enough physical fidelity to change the set of feasible or optimal portfolios, as it does here: on \texttt{base/p3uhs2}, the linear model installs nothing because it places the lowest voltage at 0.9503 pu against a true 0.9455, below the 0.95 pu limit (a change in feasibility), and on five of the other nine cases the search certifies a cheaper portfolio (a change in optimality).
Second, the properties identified in Theorem \ref{thm:validity} must be reliable enough that the cuts of a simple sensitivity-based search do not exclude the portfolios that improve on the linearized optimum, as the same five cases confirm. In particular, on \texttt{base/p11uhs0} and \texttt{base/p15uhs4} the search trades conductor capacity for voltage control, exemplifying the substitutions that the sensitivity information reveals to a planner.

The benchmark solves every case in under 7\,s, 52\,s in total. The search takes 448\,s in total under E$\infty$, the fastest variant on seven of the twelve cases, and takes 547 to 806\,s under the others. No arm needs more than 70 proposals and 52 of the 60 need fewer than 20; each proposal that receives an aggregate cut needs between one and six further solves to search for a targeted cut, and one for 501 of the 625. With aggregate cuts alone, several arms crawl through the cross-substitution described in Section \ref{sec:cutting}, some to a 400-iteration limit, which motivates the targeted cuts.

\section{Conclusion} \label{sec:conclusion}

This work proposed an iterative, deficiency-guided cutting-plane method for unbalanced three-phase distribution system planning. The method mirrors existing planning loops, with an OPF in place of the power flow study and a mixed-integer linear program in place of manual upgrade selection, and each subproblem is handled by mature off-the-shelf solvers. At every iteration, the deficiency locates violations and its sensitivities estimate the relief each candidate would provide, information a planning engineer can inspect directly. We showed that the error of the resulting first-order approximation is bounded above by the concavity of the slack response, so that the approximation tends to underestimate the deficiency, which motivates the search. The search returned a certified portfolio on all 60 arms of Section \ref{sec:experiments}. Its portfolios outperformed those of a linearized benchmark, at lower cost on five of the ten cases that need upgrades, and its five variants agreed in cost on eleven of twelve cases. These results show that a deficiency-guided search can bring optimization-based planning to unbalanced three-phase feeders under the full network physics.
\ifarxiv
\appendices
\section{Proofs of the Results of Section \ref{sec:theory}} \label{app:proofs}

Throughout, $\lVert \cdot \rVert$ is the Euclidean norm. Two elementary facts about semialgebraic functions of one variable are used repeatedly. First, every semialgebraic subset of $\mathbb{R}$ is a finite union of points and intervals. Second, let $f$ be continuously differentiable and bounded, $\lvert f \rvert \leq c$, on an open interval $I$, with $f'$ semialgebraic. Then the sets $\{f' > 0\}$ and $\{f' < 0\}$ are finite unions of intervals, so $I$ splits into finitely many subintervals, say $m$, on each of which $f$ is monotone, and $\int_I \lvert f' \rvert \leq 2cm$. In particular, $f'$ is integrable on $I$. Derivatives of semialgebraic functions, partial derivatives included, are semialgebraic, since their graphs are defined by first-order formulas \cite[Prop.~2.2.4]{bochnak1998} (see \cite[Prop.~2.9.1]{bochnak1998} for one variable), and so are compositions of semialgebraic maps.

A \textit{direction} of $\mathcal{A}$ is a difference $b - b'$ of two of its points. Since $U$ is open in $\mathcal{A}$, $b + \tau d \in U$ for every $b \in U$, every direction $d$ and every sufficiently small $\tau > 0$. Lemma \ref{lem:directional}, on which Propositions \ref{prop:sensitivity} and \ref{prop:atomic} both rest, computes the one-sided directional derivatives of the branch's value. As in Sections \ref{sec:branches} and \ref{sec:validity}, $\mathcal{M}$ is fixed and its subscript dropped.

\begin{lemma}[Directional derivatives] \label{lem:directional}
    Under Assumption \ref{assumption:branch}, at every $b \in U$ and in every direction $d$ of $\mathcal{A}$, the one-sided directional derivative $v'(b; d) \equiv \lim_{\tau \downarrow 0} (v(b + \tau d) - v(b)) / \tau$ exists and equals
    \begin{equation*}
        v'(b; d) = \max_{(\lambda, \mu) \in \Lambda(b)} \partial_b L\left(\zeta(b), \lambda, \mu, b\right) d .
    \end{equation*}
\end{lemma}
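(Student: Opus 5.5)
The plan is a Danskin-type argument that squeezes the difference quotient between two bounds. The upper bound uses the local minimality of $\zeta(b+\tau d)$ together with a feasible comparison point built from $\zeta(b)$. The lower bound uses the Lagrangian at $\zeta(b+\tau d)$ with a multiplier drawn from $\Lambda(b)$. Throughout, write $z_\tau \equiv \zeta(b+\tau d)$, $z_0 \equiv \zeta(b)$, and $\phi(\lambda,\mu) \equiv \partial_b L(z_0,\lambda,\mu,b)\,d$.

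First, under the Mangasarian--Fromovitz qualification (Assumption \ref{assumption:branch}\ref{def:regular:mfcq}), $\Lambda(b)$ is nonempty, convex and compact, so the maximum of the linear function $\phi$ over it is attained.

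\emph{Lower bound.} Fix any $(\lambda,\mu)\in\Lambda(b)$. Feasibility of $z_\tau$ and $\mu\ge 0$ give
\[
v(b+\tau d)=F(z_\tau)\ge L(z_\tau,\lambda,\mu,b+\tau d).
\]
Complementarity gives $v(b)=L(z_0,\lambda,\mu,b)$. By Assumption \ref{assumption:branch}\ref{def:regular:semialgebraic}, $z_\tau-z_0=O(\tau)$. Since $L$ is polynomial, a first-order Taylor expansion in $(z,b)$ around $(z_0,b)$ makes the $z$-term vanish, because $\nabla_z L=0$ at a multiplier in $\Lambda(b)$. This leaves
\[
L(z_\tau,\lambda,\mu,b+\tau d)-v(b)=\tau\,\phi(\lambda,\mu)+O(\tau^2).
\]
Taking the supremum over $\Lambda(b)$ gives $\liminf_{\tau\downarrow0}\,(v(b+\tau d)-v(b))/\tau\ge\max_{\Lambda(b)}\phi$.

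\emph{Upper bound.} This is the main obstacle. A bare local minimizer at $b+\tau d$ need not beat a perturbed copy of $z_0$ unless that copy lies in the local basin where $z_\tau$ is minimal. Here the Lipschitz continuity of $\zeta$ is meant to ensure this, since $z_\tau\to z_0$ and any feasible point within $O(\tau)$ of $z_0$ is eventually close to $z_\tau$. I would first try to argue that, for small $\tau$, local minimality of $z_\tau$ holds on a ball of radius bounded below uniformly in $\tau$. If that fails, I would instead obtain the upper bound from the structure of $\zeta$ by a second route: by Lemma \ref{lem:finite} applied to the segment, $v$ is analytic on some $(0,\epsilon)$, so the one-sided derivative exists. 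That derivative can then be identified from the first-order (KKT) conditions at $z_\tau$, taking limits of multipliers $(\lambda_\tau,\mu_\tau)\in\Lambda(b+\tau d)$; these are bounded under MFCQ and have cluster points in $\Lambda(b)$. The mean value identity $v(b+\tau d)-v(b)=\int_0^\tau \partial_t L(\lambda_s,\mu_s,s)\,ds$ then gives a limit of the form $\phi(\bar\lambda,\bar\mu)\le\max_{\Lambda(b)}\phi$.

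For the basin route, I would construct the comparison point by a linearized feasibility restoration. MFCQ supplies a direction $w$ with $\nabla_z H(z_0,b)\,w=0$ onto the equality constraints' range, surjective Jacobian, and strict decrease of the active $G$. A standard Robinson stability / implicit-function argument then gives a feasible $\hat z_\tau$ for the program at $b+\tau d$ with $\|\hat z_\tau-z_0\|\le C\tau$. Its objective satisfies
\[
F(\hat z_\tau)\le v(b)+\tau\,\nabla F(z_0)^\top y+o(\tau),
\]
where $y$ ranges over solutions of the linearized constraint system in direction $d$. By local minimality of $z_\tau$ in that basin, $v(b+\tau d)\le F(\hat z_\tau)$. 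Minimizing over $y$ and invoking LP duality for the linearized problem yields $\limsup\le\max_{\Lambda(b)}\phi$.

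Combining the two bounds, the limit exists and equals $\max_{\Lambda(b)}\phi$.
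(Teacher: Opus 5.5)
There is a gap in the upper bound along the route you finally commit to. Your lower bound coincides with the paper's.

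\textbf{The gap.} The step ``by local minimality of $z_\tau$ in that basin, $v(b+\tau d)\le F(\hat z_\tau)$'' requires $z_\tau$ to be minimal on a ball of radius at least of order $\tau$, because $\lVert \hat z_\tau - z_\tau \rVert \le \lVert \hat z_\tau - z_0 \rVert + \lVert z_0 - z_\tau \rVert = O(\tau)$. Nothing in Assumption \ref{assumption:branch} bounds the radius of local minimality of $\zeta(b+\tau d)$ from below, and that radius may shrink as $\tau \downarrow 0$. A feasible comparison point bounds the optimal value $v^*(b+\tau d)$, not the branch value $v = F \circ \zeta$. So the restoration and LP-duality machinery does not reach the quantity you need.

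\textbf{Your fallback.} It can be completed, but its key identity $v'(s) = \partial_t L(\lambda_s, \mu_s, s)$ is asserted rather than derived. To derive it, first show that the active set is constant on a small $(0,\epsilon)$: each $G_i(\zeta(s), b + sd)$ is analytic there and is either identically zero or has finitely many zeros. Then differentiate $H(\zeta(s), b+sd) \equiv 0$ and the active rows of $G$ along the branch, and combine with stationarity and complementarity. You would also have to rerun the argument of Lemma \ref{lem:finite} on a segment starting at an arbitrary $b \in U$, since that lemma is stated only between lattice portfolios.

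\textbf{The simpler argument.} None of this is needed. The upper bound is your lower-bound argument with the roles of $b$ and $b + \tau d$ exchanged, and it uses stationarity, never minimality.
\begin{itemize}
\item Take $(\lambda_\tau, \mu_\tau) \in \Lambda(b + \tau d)$. Since $z_0$ is feasible at $b$ and $\mu_\tau \ge 0$, $v(b) \ge L(z_0, \lambda_\tau, \mu_\tau, b)$.
\item Complementarity gives $v(b+\tau d) = L(z_\tau, \lambda_\tau, \mu_\tau, b + \tau d)$.
\item Expand about $(z_\tau, b + \tau d)$, where $\nabla_z L$ vanishes, and use $\lVert z_\tau - z_0 \rVert = O(\tau)$:
\[
v(b+\tau d) - v(b) \le \tau\, \partial_b L(z_\tau, \lambda_\tau, \mu_\tau, b + \tau d)\, d + \kappa' \tau^2 .
\]
\item The constant $\kappa'$ is uniform because the multipliers stay bounded, which you already noted. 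An unbounded sequence would normalize to a nonzero abnormal multiplier at $z_0$, which Assumption \ref{assumption:branch}\ref{def:regular:mfcq} excludes, and every cluster point lies in $\Lambda(b)$.
\item Choose $\tau_n \downarrow 0$ realizing the upper limit, with multipliers converging to some $(\bar\lambda, \bar\mu) \in \Lambda(b)$. This gives $\limsup \le \phi(\bar\lambda, \bar\mu) \le \max_{\Lambda(b)} \phi$.
\end{itemize}
This is the paper's proof, and it needs no feasibility restoration, LP duality, or analyticity.
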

\begin{proof}
    By conditions \ref{def:regular:minimizer} and \ref{def:regular:mfcq} of Assumption \ref{assumption:branch}, $\Lambda(b)$ is nonempty and compact \cite{gauvin1977}, so the maximum is attained. Write $b_\tau \equiv b + \tau d$ and $\ell(\lambda, \mu, b') \equiv \partial_b L(\zeta(b'), \lambda, \mu, b')\, d$, and let $\kappa_\zeta$ be a Lipschitz constant of $\zeta$ near $b$. The Lagrangian is a polynomial in $(z, b)$ whose coefficients are affine in $(\lambda, \mu)$.

    \textit{Lower bound.} Let $(\lambda, \mu) \in \Lambda(b)$ and $\tau > 0$. The point $\zeta(b_\tau)$ is feasible at $b_\tau$ and $\mu \geq 0$, so $v(b_\tau) = F(\zeta(b_\tau)) \geq L(\zeta(b_\tau), \lambda, \mu, b_\tau)$, while $v(b) = L(\zeta(b), \lambda, \mu, b)$ by feasibility and complementarity. The gradient $\nabla_z L$ vanishes at $(\zeta(b), \lambda, \mu, b)$ by stationarity, and $\lVert \zeta(b_\tau) - \zeta(b) \rVert \leq \kappa_\zeta \tau \lVert d \rVert$. A second-order Taylor expansion of $L$ about $(\zeta(b), b)$ therefore gives
    \begin{equation*}
        v(b_\tau) - v(b) \geq \tau\, \ell(\lambda, \mu, b) - \kappa \tau^2 ,
    \end{equation*}
    with a constant $\kappa$ uniform over $(\lambda, \mu)$ in any bounded set. Dividing by $\tau$ and letting $\tau \downarrow 0$, the lower limit of the difference quotients is at least $\ell(\lambda, \mu, b)$ for every $(\lambda, \mu) \in \Lambda(b)$, and hence at least $\max_{\Lambda(b)} \ell$.

    \textit{Upper bound.} Let $(\lambda_\tau, \mu_\tau) \in \Lambda(b_\tau)$. These multipliers remain bounded as $\tau \downarrow 0$. Otherwise, along a sequence with $\lVert (\lambda_\tau, \mu_\tau) \rVert \to \infty$, the normalized multipliers would converge to a nonzero $(\hat{\lambda}, \hat{\mu})$ with $\hat{\mu} \geq 0$, $\nabla_z H^\top \hat{\lambda} + \nabla_z G^\top \hat{\mu} = 0$ at $(\zeta(b), b)$, and $\hat{\mu}_i = 0$ for every constraint inactive at $\zeta(b)$, by the continuity of $\zeta$ and of the constraint functions; such a vector cannot exist under the Mangasarian--Fromovitz qualification at $\zeta(b)$, condition \ref{def:regular:mfcq}. By the same continuity, every limit point of the multipliers satisfies the conditions defining $\Lambda(b)$. Exchanging the roles of $b$ and $b_\tau$ in the argument above, and expanding about $(\zeta(b_\tau), b_\tau)$,
    \begin{equation*}
        v(b_\tau) - v(b) \leq \tau\, \ell(\lambda_\tau, \mu_\tau, b_\tau) + \kappa' \tau^2 ,
    \end{equation*}
    with $\kappa'$ uniform because the multipliers are bounded. Take $\tau_n \downarrow 0$ along which the difference quotients approach their upper limit and the multipliers converge to some $(\bar{\lambda}, \bar{\mu}) \in \Lambda(b)$. Since $\ell$ is continuous, the upper limit is at most $\ell(\bar{\lambda}, \bar{\mu}, b) \leq \max_{\Lambda(b)} \ell$.

    The two bounds show that the limit exists and equals $\max_{\Lambda(b)} \ell$.
\end{proof}

\begin{proof}[Proof of Proposition \ref{prop:sensitivity}]
    An affine bijection from $\mathbb{R}^m$, $m = \dim \mathcal{A}$, onto $\mathcal{A}$ preserves semialgebraicity, openness, Lipschitz continuity and dimension, so $U$ may be regarded as an open subset of $\mathbb{R}^m$. By condition \ref{def:regular:semialgebraic} of Assumption \ref{assumption:branch}, $\zeta$ is semialgebraic, and so is its domain $U$. The $C^1$ cell decomposition theorem \cite[Ch.~7, (3.2)]{vandendries1998}, applied to each component of $\zeta$ and followed by a common refinement, therefore partitions $U$ into finitely many cells, on each of which $\zeta$ is continuously differentiable. The union of the cells of lower dimension than $U$ is a semialgebraic set of lower dimension than $U$, and on each remaining, open cell, $v = F \circ \zeta$ is continuously differentiable.

    Let $v$ be differentiable at $b$, and let $d$ be a direction of $\mathcal{A}$. By Lemma \ref{lem:directional}, applied in the directions $d$ and $-d$,
    \begin{align*}
        \nabla v(b)^\top d & = v'(b; d) = \max_{\Lambda(b)} \partial_b L\, d , \\
        -\nabla v(b)^\top d & = v'(b; -d) = -\min_{\Lambda(b)} \partial_b L\, d ,
    \end{align*}
    with $\partial_b L$ evaluated at $(\zeta(b), \lambda, \mu, b)$. Hence $\partial_b L\, d$ takes the single value $\nabla v(b)^\top d$ over $\Lambda(b)$. Since $F$ does not depend on $b$, $\partial_b L\, d = (\partial_b H^\top \lambda + \partial_b G^\top \mu)^\top d$, and as $d$ is an arbitrary direction of $\mathcal{A}$, \eqref{eq:sensitivity} holds for every $(\lambda, \mu) \in \Lambda(b)$.
\end{proof}

\begin{proof}[Proof of Lemma \ref{lem:phi_convex}]
    For $s = 1$, $\varphi = F$, which is convex by property \ref{def:deficiency:homogeneous} of Definition \ref{def:deficiency}. Let $s > 1$. For any $z$, the restriction $\kappa \mapsto F(\kappa z) = \kappa^s F(z)$ of $F$ to the ray $\kappa > 0$ is convex, which for $s > 1$ forces $F(z) \geq 0$. Hence $\varphi = F^{1/s}$ is well defined, nonnegative and positively homogeneous of degree one, and its sublevel set $K \equiv \{z : \varphi(z) \leq 1\} = \{z : F(z) \leq 1\}$ is convex. By homogeneity, $z \in \kappa K$ if and only if $\varphi(z) \leq \kappa$, for every $\kappa > 0$.

    Let $x$ and $y$ be arbitrary, $\theta \in [0, 1]$, $a > \varphi(x)$ and $c > \varphi(y)$, so that $x / a$ and $y / c$ lie in $K$. With $e \equiv \theta a + (1 - \theta) c$, convexity of $K$ gives
    \begin{equation*}
        \theta x + (1 - \theta) y = e \left( \frac{\theta a}{e} \frac{x}{a} + \frac{(1 - \theta) c}{e} \frac{y}{c} \right) \in e K ,
    \end{equation*}
    so $\varphi(\theta x + (1 - \theta) y) \leq \theta a + (1 - \theta) c$. Letting $a \downarrow \varphi(x)$ and $c \downarrow \varphi(y)$ gives the convexity of $\varphi$.
\end{proof}

\begin{proof}[Proof of Lemma \ref{lem:finite}]
    The segment from $\bar{b}$ to $b$ lies in $\operatorname{conv} \mathcal{B} \subset U$. By condition \ref{def:regular:semialgebraic} of Assumption \ref{assumption:branch}, the curve $t \mapsto \zeta(t)$ is therefore semialgebraic on $[0, 1]$, and, being locally Lipschitz on a compact set, it is Lipschitz with some constant $\kappa_\zeta$.

    \textit{Finitely many events.} The graph of $\zeta$ over $(0, 1)$ is a finite disjoint union of Nash submanifolds \cite[Prop.~2.9.10]{bochnak1998}, each of them a point or a curve. On each curve, the projection to the $t$-axis has finitely many critical values \cite[Thm.~9.6.2]{bochnak1998}. Exclude these critical values, the $t$-coordinates of the points, and the endpoints of the projections of the curves, finitely many times in all. At any other time $t$, the graph meets the line over $t$ only at $(t, \zeta(t))$, so exactly one curve contains the graph near that point, and its projection to the $t$-axis is a local Nash diffeomorphism there. Hence $\zeta$ is Nash, and so analytic, near every time not excluded. Further, $F(\zeta(t))$ is semialgebraic in $t$, so its zero set is a finite union of points and intervals, and $\delta = F(\zeta)^{1/s}$ has finitely many isolated zeros.

    \textit{Lipschitz continuity.} By Lemma \ref{lem:phi_convex} and property \ref{def:deficiency:homogeneous} of Definition \ref{def:deficiency}, $\varphi$ is finite, convex and positively homogeneous of degree one, hence subadditive: $\varphi(x + y) = 2 \varphi(\frac{x + y}{2}) \leq \varphi(x) + \varphi(y)$. Thus $\varphi(x) - \varphi(y) \leq \varphi(x - y) \leq \kappa_\varphi \lVert x - y \rVert$, where $\kappa_\varphi$, the maximum of $\varphi$ over the unit sphere, is finite because a finite convex function is continuous. Exchanging $x$ and $y$, $\varphi$ is Lipschitz with constant $\kappa_\varphi$, and $\delta = \varphi \circ \zeta$ is Lipschitz with constant $\kappa_\varphi \kappa_\zeta$.

    \textit{One-sided derivatives.} Let $(\tau_0, \tau_1)$ be an open interval between consecutive events, or between an event and an endpoint of $[0, 1]$. There $F(\zeta(t))$ is analytic, so it either has no zeros or vanishes identically: an isolated zero would be an event, and a non-isolated zero of an analytic function on an interval forces it to vanish throughout. Hence $\delta$ is analytic on $(\tau_0, \tau_1)$, with $\delta'$ semialgebraic and $\lvert \delta' \rvert \leq \kappa_\varphi \kappa_\zeta$. If $\delta'(t)$ had no limit as $t \downarrow \tau_0$, a level $\kappa$ strictly between its lower and upper limits would make $\{t \in (\tau_0, \tau_1) : \delta'(t) > \kappa\}$ a union of infinitely many intervals accumulating at $\tau_0$, contradicting the first fact above. So $\delta'(t)$ converges as $t \downarrow \tau_0$, to a finite limit by the bound on $\delta'$. By the mean value theorem, the difference quotient $(\delta(\tau_0 + h) - \delta(\tau_0)) / h$ equals $\delta'$ at some point of $(\tau_0, \tau_0 + h)$, so the right derivative $\delta'(\tau_0^+)$ exists and equals this limit. Left derivatives are treated symmetrically, and the same argument, applied to each component of the Lipschitz, semialgebraic curve $\zeta$, shows that $\zeta$ also has finite one-sided derivatives at every time.
\end{proof}

Proposition \ref{prop:accounting} rests on the integrability of the two integrands it separates, which Lemma \ref{lem:integrable} establishes.

\begin{lemma}[Integrability] \label{lem:integrable}
    Under Assumption \ref{assumption:branch}, on every stretch on which $\delta > 0$, between consecutive events or between an event and an endpoint of $[0, 1]$, the functions $\delta''$, $\dot{\zeta}^\top \nabla^2\varphi(\zeta)\, \dot{\zeta}$ and $\nabla\varphi(\zeta)^\top \ddot{\zeta}$ are integrable, and
    \begin{equation} \label{eq:chain}
        \delta'' = \dot{\zeta}^\top \nabla^2\varphi(\zeta)\, \dot{\zeta} + \nabla\varphi(\zeta)^\top \ddot{\zeta} .
    \end{equation}
\end{lemma}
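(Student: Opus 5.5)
On a stretch $(\tau_0, \tau_1)$ between consecutive events (or an event and an endpoint) on which $\delta > 0$, I will first show that $\zeta$ stays away from the set where $\varphi$ fails to be smooth, and hence that the chain rule \eqref{eq:chain} holds pointwise there. After that, integrability follows from the elementary monotonicity fact about semialgebraic derivatives recalled at the start of the appendix.

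\textbf{Smoothness of $\varphi$ along the curve.} By Lemma \ref{lem:finite}, $\zeta$ is analytic on the stretch, and $F(\zeta(t)) = v(t) > 0$ there. Since $F$ is polynomial by property \ref{def:deficiency:polynomial} of Definition \ref{def:deficiency}, $F$ is analytic. The map $u \mapsto u^{1/s}$ is analytic on $u > 0$, so $\varphi = F^{1/s}$ is analytic on the open set $\{F > 0\}$, which contains $\zeta(t)$ for every $t$ in the stretch. Hence $\delta = \varphi \circ \zeta$ is twice continuously differentiable there, and differentiating twice gives \eqref{eq:chain} pointwise.

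\textbf{Integrability of the three terms.} Each of $\delta'$, $\dot{\zeta}$, the term $\nabla\varphi(\zeta)^\top \ddot{\zeta}$ and the term $\dot{\zeta}^\top \nabla^2\varphi(\zeta)\dot{\zeta}$ is semialgebraic in $t$. This follows because derivatives and compositions of semialgebraic maps are semialgebraic, and $\varphi$ is semialgebraic, its graph being defined by $y^s = F(z)$ with $y \geq 0$. I will argue for each term in turn.
\begin{itemize}
\item[(a)] $\delta''$: By Lemma \ref{lem:finite}, $\delta'$ is bounded by the Lipschitz constant $\kappa_\varphi \kappa_\zeta$ and is $C^1$ on the stretch. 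The elementary fact then gives $\int \lvert \delta'' \rvert \leq 2\kappa_\varphi\kappa_\zeta m < \infty$.
\item[(b)] The quadratic term: it is nonnegative by Lemma \ref{lem:phi_convex}, since $\nabla^2\varphi \succeq 0$ on the open set where $\varphi$ is $C^2$ and convex. Its integral is therefore either finite or $+\infty$, and I need a dominating bound, which (c) will supply.
\item[(c)] The response term: by \eqref{eq:chain} it equals $\delta''$ minus the quadratic term.
\end{itemize}

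\textbf{Main obstacle.} Subtraction alone does not give integrability of two terms; one of them must be controlled independently. The plan is to bound $\int \nabla\varphi(\zeta)^\top\ddot{\zeta}$ through $g(t) \equiv \nabla\varphi(\zeta(t))^\top \dot{\zeta}(t) = \delta'(t)$. That by itself is circular, so instead I will control the response term by splitting into components.
\begin{itemize}
\item[(i)] $\lVert \nabla\varphi(\zeta) \rVert \leq \kappa_\varphi$, because $\varphi$ is $\kappa_\varphi$-Lipschitz (proof of Lemma \ref{lem:finite}).
\item[(ii)] Each component $\dot{\zeta}_k$ is bounded by $\kappa_\zeta$ and $C^1$ with semialgebraic derivative, so the elementary fact gives $\int \lvert \ddot{\zeta}_k \rvert < \infty$.
\end{itemize}
Combining (i) and (ii), $\lvert \nabla\varphi(\zeta)^\top\ddot{\zeta} \rvert \leq \kappa_\varphi \sum_k \lvert \ddot{\zeta}_k \rvert$, which is integrable. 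The quadratic term is then integrable as the difference $\delta''$ minus the response term, both of which are now integrable. The fact that $\zeta$ is componentwise bounded with bounded derivative, from Lemma \ref{lem:finite}, is what makes this step work.

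Possible degeneracies are handled as follows. The stretch may abut a boundary event where $\delta \to 0$ and $\nabla^2\varphi$ blows up. This does not harm the argument, because only the Lipschitz bound on $\nabla\varphi$ and the bounds on $\zeta$ are used, and both are uniform up to the endpoints.
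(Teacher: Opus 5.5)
Your proof is correct and follows the paper's argument: the chain rule on $\{F>0\}$, integrability of $\delta''$ from the monotone-pieces fact applied to the bounded $\delta'$, and then direct control of one term with the other obtained by subtraction. The only difference is that you swap the roles of the two terms. You apply the fact to the bounded components of $\dot{\zeta}$ and bound the response term by $\kappa_\varphi \sum_k \lvert \ddot{\zeta}_k \rvert$, whereas the paper applies it to the bounded components of $p = \nabla\varphi(\zeta)$ and bounds the curvature term $\dot{p}^\top \dot{\zeta}$ by $\kappa_\zeta \sum_k \lvert \dot{p}_k \rvert$; both work equally well.
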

\begin{proof}
    Let $(\tau_0, \tau_1)$ be such a stretch, and let $\kappa_\zeta$ and $\kappa_\varphi$ be as in the proof of Lemma \ref{lem:finite}. Since $F$ is a polynomial and $F(\zeta(t)) > 0$ there, $\varphi = F^{1/s}$ is smooth in a neighborhood of each $\zeta(t)$, and since $\varphi$ does not depend on $t$, the chain rule gives \eqref{eq:chain}.

    The derivative $\delta'$ is bounded by $\kappa_\varphi \kappa_\zeta$ and continuously differentiable, with $\delta''$ semialgebraic, so $\delta''$ is integrable by the second fact above. Let $p(t) \equiv \nabla\varphi(\zeta(t))$. Each component $p_k$ is semialgebraic and continuously differentiable, and bounded by $\kappa_\varphi$, since the gradient of a Lipschitz function is bounded by its Lipschitz constant. By the second fact, each $\dot{p}_k$ is integrable. Since $\dot{p} = \nabla^2\varphi(\zeta)\, \dot{\zeta}$, the first term on the right of \eqref{eq:chain} equals $\dot{p}^\top \dot{\zeta}$, which is bounded in absolute value by $\kappa_\zeta \sum_k \lvert \dot{p}_k \rvert$ and is therefore integrable. The second term, the difference of two integrable functions, is integrable as well.
\end{proof}

\begin{proof}[Proof of Proposition \ref{prop:accounting}]
    Let $0 = t_0 < t_1 < \dots < t_K < t_{K+1} = 1$, with $t_1, \dots, t_K$ the events. By Lemma \ref{lem:finite} and its proof, on each stretch $(t_k, t_{k+1})$ the function $\delta$ is analytic and either positive throughout or identically zero, and $\delta'$ has the finite one-sided limits $\delta'(t_k^+)$ and $\delta'(t_{k+1}^-)$. On a stretch where $\delta > 0$, $\delta''$ is integrable by Lemma \ref{lem:integrable}, so $\delta'$, extended by its one-sided limits, is absolutely continuous on the closed stretch; on a stretch where $\delta \equiv 0$, $\delta' \equiv 0$. On either kind of stretch, integration by parts and $\int_{t_k}^{t_{k+1}} \delta' = \delta(t_{k+1}) - \delta(t_k)$, valid for the Lipschitz $\delta$, give
    \begin{align*}
        & \int_{t_k}^{t_{k+1}} (1 - \xi)\, \delta''(\xi)\, \mathrm{d}\xi \\
        & \quad = (1 - t_{k+1})\, \delta'(t_{k+1}^-) - (1 - t_k)\, \delta'(t_k^+) \\
        & \qquad + \delta(t_{k+1}) - \delta(t_k) .
    \end{align*}
    Summing over $k$, the first two terms telescope to $-\delta'(0^+) - \sum_{i=1}^K (1 - t_i) J_i$, so
    \begin{equation*}
        \delta(1) = \delta(0) + \delta'(0^+) + \int_0^1 (1 - \xi)\, \delta''(\xi)\, \mathrm{d}\xi + \Delta_{\mathrm{A}} .
    \end{equation*}
    The deficiency $\delta = v^{1/s}$ is differentiable at $\bar{b}$, where $v(\bar{b}) > 0$ (Section \ref{sec:branches}), so $\nabla\delta(\bar{b})^\top d = \delta'(0^+)$ and $\Delta = \delta(0) + \delta'(0^+) - \delta(1)$. The set $\Omega$ is the union of the stretches on which $\delta > 0$. On $\Omega$, \eqref{eq:chain} splits $\delta''$ into the integrands of $\Delta_{\mathrm{N}}$ and $\Delta_{\mathrm{R}}$, both integrable by Lemma \ref{lem:integrable}; on the other stretches, $\delta'' = 0$, and the events are finitely many. Hence $\int_0^1 (1 - \xi)\, \delta''(\xi)\, \mathrm{d}\xi = \Delta_{\mathrm{N}} + \Delta_{\mathrm{R}}$, with both terms finite, and $\Delta = -\Delta_{\mathrm{A}} - \Delta_{\mathrm{N}} - \Delta_{\mathrm{R}}$.
\end{proof}

\begin{proof}[Proof of Proposition \ref{prop:atomic}]
    Fix $t \in (0, 1)$. The set $\Lambda(t)$ is nonempty and compact by Lemma \ref{lem:directional} and its proof, and convex as the solution set of linear equations and inequalities in $(\lambda, \mu)$. Lemma \ref{lem:directional}, applied at $\bar{b} + td$ in the directions $d$ and $-d$, gives $v'(t^+) = \max_{\Lambda(t)} \partial_t L$ and $v'(t^-) = -\max_{\Lambda(t)} (-\partial_t L) = \min_{\Lambda(t)} \partial_t L$.

    Since $\zeta(t_i)$ is feasible, $v(t_i) \geq v^*(t_i) \geq 0$ by property \ref{def:deficiency:contract} of Definition \ref{def:deficiency}, so either $v(t_i) > 0$ or $\delta(t_i) = 0$. Where $v(t_i) > 0$, the chain rule for one-sided derivatives gives $\delta'(t_i^\pm) = \frac{1}{s} v(t_i)^{1/s - 1} v'(t_i^\pm)$, which is \eqref{eq:atom_interior}, nonnegative as a maximum less a minimum over the same set. Where $\delta(t_i) = 0$, the one-sided derivatives exist by Lemma \ref{lem:finite}, and $\delta \geq 0$ forces $\delta'(t_i^-) \leq 0 \leq \delta'(t_i^+)$, which gives \eqref{eq:atom_boundary} and $J_i \geq 0$.
\end{proof}

\begin{proof}[Proof of Theorem \ref{thm:validity}]
    Each $J_i$ is nonnegative by Proposition \ref{prop:atomic} and each weight $1 - t_i$ is positive, so $\Delta_{\mathrm{A}} \geq 0$. On $\Omega$, $\varphi$ is convex by Lemma \ref{lem:phi_convex} and smooth near the branch, so its Hessian is positive semidefinite there and the integrand of $\Delta_{\mathrm{N}}$ is nonnegative; hence $\Delta_{\mathrm{N}} \geq 0$. Since all three terms of \eqref{eq:accounting} are finite by Proposition \ref{prop:accounting}, \eqref{eq:overestimation} follows.
\end{proof}
\fi

\section*{Acknowledgment}
The authors thank Alan Valenzuela for his assistance with the experiments. The authors used Claude (Anthropic) to help draft and edit the text throughout and to write the code for the figures and tables.

\bibliography{ref.bib}

\vfill

\end{document}